\documentclass[10pt, reqno]{article}
\usepackage{macros}
\newgeometry{margin=1in}
\usepackage{bibentry}
\usepackage[dvipsnames]{xcolor}
\usepackage{soul}

\newtheoremstyle{sltheorem} %
    {\topsep}                    %
    {\topsep}                    %
    {\slshape}                   %
    {}                           %
    {\bfseries}                   %
    {.}                          %
    {.5em}                       %
    {}  %

\theoremstyle{sltheorem}
\newtheorem{theorem}{Theorem}

\newtheorem{lemma}[theorem]{Lemma}

\newtheorem{as}{Assumption}

\theoremstyle{definition}

\title{Mostly Harmless Machine Learning: \\ Learning Optimal Instruments
in Linear IV Models\thanks{This work previously appeared in the
Machine Learning and Economic Policy Workshop at NeurIPS 2020. The authors
thank Isaiah Andrews, Mike Droste, Bryan
Graham, Jeff Gortmaker, Sendhil Mullainathan, Ashesh Rambachan, David
Ritzwoller, Brad Ross, Jonathan
Roth, Suproteem Sarkar, Neil Shephard, Rahul Singh, Jim Stock, Liyang Sun,
Vasilis
Syrgkanis, Chris Walker, Wilbur Townsend, and Elie Tamer for helpful comments.}}
\author{%
  Jiafeng Chen \\
  Harvard Business School\\
  Boston, MA \\
  \texttt{jchen@hbs.edu} \\
  \and
  Daniel L. Chen\\
  Toulouse School of Economics \\ 
  Toulouse, France \\ 
  \texttt{dlchen@nber.org}
  \and
  Greg Lewis\\ 
  Microsoft Research\\
  Cambridge, MA \\
  \texttt{glewis@microsoft.com}
}

\newcommand{\n}{N} %
\newcommand{\nh}{n} %
\newcommand{\yi}{Y_i} %
\newcommand{\di}{D_i} %
\newcommand{\covi}{X_i} %
\newcommand{\wi}{W_i} %
\newcommand{\ti}{T_i} %
\newcommand{\zi}{Z_i} %
\newcommand{\ui}{U_i} %
\newcommand{\instrument}{\tilde \upsilon} %
\newcommand{\Opt}{\Upsilon^\star} %
\newcommand{\limitopt}{\upsilon}
\newcommand{\limitOpt}{\Upsilon}

\renewcommand{\Pn}{\frac{1}{\n} \sum_{i=1}^\n}
\newcommand{\defeq}{\equiv}
\newcommand{\sample}{S}

\newcommand{\estopt}{\hat\limitopt}
\newcommand{\estOpt}{\hat\limitOpt}
\newcommand{\estoptj}{\hat\limitopt^{(j)}}
\newcommand{\estoptjtilde}{\tilde \limitopt^{(j)}}

\newcommand{\estOptj}{\hat\limitOpt^{(j)}}
\newcommand{\estOpti}{\hat\limitOpt_i}

\newcommand{\limitOpti}{\Upsilon_i}

\newcommand{\mlssn}{\hat\theta^{\mathrm{MLSS}}_{\n}}
\newcommand{\data}{R}

\newcommand{\G}{G} %
\newcommand{\VS}{\Omega} %
\newcommand{\skedas}{\sigma^2}

\newcommand{\condmean}{\mu}

\newcommand{\vi}{V_i} %
\newcommand{\vmte}{v} %
\newcommand{\MTE}{\operatorname{MTE}_a}

\newcommand{\weight}{w} %
\newcommand{\Epl}{\E_{\mathrm{(PL)}}}
\newcommand{\El}{\E_{\mathrm{(L)}}}
\newcommand{\err}{\mathrm{Err}} %
\newcommand{\AR}{\operatorname{AR}}

\newcommand{\tildeui}{\tilde U_i}
\newcommand{\vn}{V_{\nh,j}}
\newcommand{\omegan}{\Omega_{\nh,j}}
\newcommand{\tildexi}{\bar X_i}
\newcommand{\deltadiff}{(\tilde\delta - \delta)}

\begin{document}
\begin{titlepage}

\maketitle

\begin{abstract}
  We offer straightforward theoretical results that justify incorporating
  machine learning in the standard linear instrumental variable setting. The key
  idea is to use machine learning, combined with sample-splitting, to predict
  the treatment variable from the instrument and any exogenous covariates, and
  then use this predicted treatment and the covariates as technical instruments
  to recover the coefficients in the second-stage. This allows the researcher to
  extract non-linear co-variation between the treatment and instrument that may
  dramatically improve estimation precision and robustness by boosting
  instrument strength. Importantly, we constrain the machine-learned predictions
  to be linear in the exogenous covariates, thus avoiding spurious
  identification arising from non-linear relationships between the treatment and
  the covariates. We show that this approach delivers consistent and
  asymptotically normal estimates under weak conditions and that it may be
  adapted to be semiparametrically efficient \citep{chamberlain1992comment}. Our
  method preserves standard intuitions and interpretations of linear
  instrumental variable methods, including under weak identification, and
  provides a simple, user-friendly upgrade to the applied economics toolbox. We
  illustrate our method with an example in law and criminal justice, examining
  the causal effect of appellate court reversals on district court sentencing
  decisions.
\end{abstract}

\end{titlepage}

\section{Introduction}
\label{sec:intro}

Instrumental variable (IV) designs are a popular method in empirical economics.
Over 30\% of all NBER working papers and top journal publications considered by
\cite{currie2020technology} include some discussion of instrumental variables.
The vast majority of IV designs used in practice are linear IV estimated via
two-stage least squares (TSLS), a familiar technique in standard
introductions to econometrics and causal inference
\citep[e.g.][]{angrist2008mostly}. Standard TSLS, however, leaves on the table
some variation provided by the instruments that may improve precision of
estimates, as it only exploits variation that is linearly related to the
endogenous regressors. In the event that the instrument has a low linear
correlation with the endogenous variable, but nevertheless predicts the
endogenous variable well through a nonlinear transformation, we should expect
TSLS to perform poorly in terms of both estimation precision and inference
robustness. In particular, in some cases, TSLS would provide spuriously precise
but biased estimates \citep[due to weak instruments, see][]{andrews2018weak}.
Such nonlinear settings become increasingly plausible when exogenous variation
includes high dimensional data or alternative data, such as text, images, or
other complex attributes like weather. We show that off-the-shelf machine
learning techniques provide a general-purpose toolbox for leveraging such
complex variation, improving instrument strength and estimate quality.

Replacing the first stage linear regression with more flexible
specifications does not come without cost in terms of stronger identifying
assumptions. The validity of TSLS hinges only upon the restriction that the
instrument is linearly uncorrelated with unobserved disturbances in the response
variable. Relaxing the linearity requires that endogenous residuals are mean
zero conditional on the exogenous instruments, which is stronger. However, it is
rare that a researcher has a compelling reason to believe the weaker
non-correlation assumption, but rejects the slightly stronger mean-independence
assumption. Indeed, whenever researchers contemplate including higher
order polynomials of the instruments, they are implicitly accepting
stronger assumptions than TSLS allows.
In fact, by not exploiting the
nonlinearities, TSLS may accidentally
make a strong instrument weak, and deliver spuriously precise inference:
\cite{dieterle2016simple} and references therein find that several applied
microeconomics papers have conclusions that are sensitive to the specification
(linear vs. quadratic) of the first-stage.

A more serious identification concern with leveraging machine learning in the
first-stage comes from the parametric functional form in the second stage. When
there are exogenous covariates that are included in the parametric structural
specification, nonlinear transformations of these covariates could in principle
be valid instruments, and provide variation that precisely estimates the
parameter of interest. 
For example, in the standard IV setup of $Y = D^\t\tau + X^\t\beta
+ U$ where $X$ is an exogenous covariate, imposing $\E[U \mid X] = 0$ would
formally result in $X^2, X^3$, etc. being valid ``excluded'' instruments.
However, given that the researcher's stated source of identification comes
from excluded instruments, such ``identifying variation'' provided by
covariates is more of an artifact of parametric specification than any serious
information from the data that relates to the researcher's scientific inquiry.

One principled response to the above concern is to make the second stage
structural specification likewise nonparametric, thereby including an infinite
dimensional parameter to estimate, making the empirical design a
\emph{nonparametric instrumental variable} (NPIV) design. Significant
theoretical and computational progress have been made in this regard
\citep[inter alia,][]{newey2003instrumental,ai2003efficient,ai2007estimation,horowitz2007nonparametric,severini2012efficiency,ai2012semiparametric,hartford2017deep,dikkala2020minimax,chen2012estimation,chen2015sieve,chernozhukov2018double,chernozhukov2016locally}.
However, regrettably, NPIV has received relatively little attention in applied
work in economics, potentially due to theoretical complications, difficulty in
interpretation and troubleshooting, and computational scalability. Moreover, in
some cases parametric restrictions on structural functions come from theoretical
considerations or techniques like log-linearization, where estimated parameters
have intuitive theoretical interpretation and policy relevance. In these cases
the researcher may have compelling reasons to stick with parametric
specifications.

In the spirit of being user-friendly to practitioners, this paper considers
estimation and inference in an instrumental variable model where the second
stage structural relationship is linear, while allowing for as much nonlinearity
in the instrumental variable as possible, without creating unintended and
spurious identifying variation from included covariates. Our results
provide intuition and justification for
using machine learning methods in instrumental variable designs. We show that
with sample-splitting, under weak consistency conditions, a simple estimator
that uses the predicted values of endogenous and included regressors as
technical instruments is consistent, asymptotically normal, and
semiparametrically efficient. The constructed instrumental variable also readily
provides weak instrument diagnostics and robust procedures. Moreover, standard
diagnostics like out-of-sample prediction quality are directly related to
quality of estimates. In the presence of exogenous covariates that are
parametrically included in the second-stage structural function, adapting
machine learning techniques requires caution to avoid spurious identification
from functional forms of the included covariates. To that end, we formulate and
analyze the problem as a sequential moment restriction, and develop estimators
that utilize machine learning for extracting nonlinear variation from and only
from instruments.

\paragraph{Related Literature.} The core techniques that allow for the construction of our
estimators follow from \cite{chamberlain1987asymptotic,chamberlain1992comment}.
The ideas in our proofs are also familiar in the double machine learning
\citep{chernozhukov2018double,belloni2012sparse} and semiparametrics literatures
\citep[e.g.][]{liu2020deep}; our arguments, however, follow from elementary
techniques that are accessible to graduate students and are self-contained. Our
proposed estimator is similar to the split-sample IV or jackknife IV estimators
in \cite{angrist1999jackknife}, but we do not restrict ourselves to linear
settings or linear smoothers. Using nonlinear or machine learning in the first
stage of IV settings is considered by \cite{xu2021weakiv} (for probit),
\cite{hansen2014instrumental} (for ridge),
\cite{belloni2012sparse,chernozhukov2015post} (for lasso), and
\cite{bai2010instrumental} (for boosting), among others; and our work can be
viewed as providing a simple, unified analysis for practitioners, much in the
spirit of \cite{chernozhukov2018double}. To the best of our knowledge, we are
the first to formally explore practical complications of making the first stage
nonlinear in a context with exogenous covariates. Finally, we view our work as
counterpoint to the recent work by \cite{angrist2019machine}, which is more
pessimistic about combining machine learning with instrumental variables---a
point we explore in detail in \cref{sub:disc}.

\section{Main theoretical results}

We consider the standard cross-sectional setup where the data $(\data_i)_
{i=1}^\n = (\yi, \di, \covi, \wi)_ {i=1}^N \iid P$ are sampled from some
infinite population. $\yi$ is some outcome variable, $\di$ is a set of
endogenous treatment variables, $\covi$ is a set of exogenous controls, and
$\wi$ is a set of instrumental variables. The researcher is willing to argue
that $\wi$ is exogenously or quasi-experimentally assigned. Moreover, the
researcher believes that $\wi$ provides a source of variation that
``identifies'' an effect $\tau$ of $\di$ on $\yi$. We denote the endogenous
variables and covariates as $\ti\defeq  [1, \di^\t, \covi^\t]^\t$ and the
excluded instrument and covariates as the \emph{technical instruments} $\zi
\defeq [1, \wi^\t, \covi^\t]^\t$.

A typical specification in empirical economics is the linear instrumental
variables specification:
\begin{align}
\yi &= \alpha + \di^\t \tau + \covi^\t \beta + \ui \quad \E[\wi \ui] = 0.
\label{eq:linIV}
\end{align}
We believe that often the researcher is willing to assume more than that $\ui$
is uncorrelated with $(\covi, \wi)$. Common introductions of instrumental
variables \citep{angrist2008mostly,angrist2001instrumental} stress that
instruments induce variation in $\di$ and are otherwise unrelated to $\ui$, and
that a common source of instruments is natural experiments. We argue that these
narratives imply a stronger form of exogeneity than TSLS requires. After all, a
symmetric mean-zero random variable $S$ is uncorrelated with $S^2$, but one
would be hard pressed to say that $S^2$ is unrelated to $S$. Moreover, the condition $\E[\wi \ui] = 0$, strictly
speaking, does not automatically make polynomial expansions of $\wi$ valid
instruments, yet using higher order polynomials is common in empirical
research, indicating that the conditional restriction $\E[\ui \mid \wi] =
0$ more accurately captures the assumptions imposed in many empirical
projects.  With this in mind,
we will assume mean independence throughout the paper: $\E[\ui
\mid \wi] = 0$. 
This stronger exogeneity assumption allows researcher to extract more
identifying variation from instruments, but doing so calls for more flexible
machinery for dealing with the first stage.\footnote{Moreover, under
conditions such that our proposed estimator is
efficient, we can test mean independence
assuming $\E[\wi\ui] = 0$, since TSLS and our proposed
estimator are two
estimators that generate a Hausman test.}

\subsection{No covariates}
\label{sub:simple}
Let us first consider the case in which we do not have exogenous
covariates $\covi$. Our mean-independence restrictions give rise to a conditional
moment restriction, $
\E[\yi - \ti^\t \theta \mid W_i] = 0
$, where $\theta = (\alpha, \tau^\t, \beta^\t)^\t$. The conditional moment
restriction encodes an infinite set of unconditional
moment restrictions: \[
\text{For all square integrable $\instrument$}: \E[\instrument(\wi)(\yi - \ti^\t\theta)] = 0.
\]
\cite{chamberlain1987asymptotic} finds that all relevant statistical information
in a conditional moment restriction is contained in a single unconditional
moment restriction involving an \emph{optimal instrument} $\Opt$, and the
unconditional moment restriction with the optimal instrument delivers
semiparametrically efficient estimation and inference. In our case, $\Opt(\wi) =
\frac{1}{\skedas(\wi)}[1, \condmean(\wi)^\t]^\t$, where $
\condmean(\wi) \defeq \E[\di \mid \wi]$ and $\skedas(\wi) = \E[\ui^2 \mid
\wi]$. We estimate $\Opt$
 with $\estOpt$ and form a plug-in estimator for $\theta$:
\begin{equation}
    \hat\theta_\n = \pr{\Pn \estOpt(\wi) \ti^\t}^{-1} \pr{\Pn \estOpt(\wi)
    \yi}.
\label{eq:plugin}
\end{equation}
This is numerically equivalent to estimating \eqref{eq:linIV} with two-stage
weighted least-squares with $\hat \condmean(\wi)$ as an instrument and weighting
with $1/\skedas(\wi)$. In particular, if $\ui$ is homoskedastic, the
optimal
instrument is simply $[1, \mu(\wi)^\t]^\t$, and two-stage least-squares with an
estimate of $\mu(\wi)$ returns an estimate
$\hat\theta_\n$.\footnote{This approach should not be confused with what many applied researchers think of when they think of two-stage least squares, namely directly regressing $\yi$
on the estimated instrument $\estOpt(\wi)$ by OLS - i.e.     $\hat\theta_\n = \pr{\Pn \estOpt(\wi) \estOpt(\wi)^\t}^{-1} \pr{\Pn \estOpt(\wi)
    \yi}.$  This is what \cite{angrist2008mostly}
term the ``forbidden regression'', and it will not generally return
consistent estimates of $\theta$.} Under
heteroskedasticity, this instrument is no longer optimal (in the sense of
semiparametric efficiency) but remains valid. Therefore, we shall refer to the
instrument with the weighting $1/\skedas (\wi)$ as the \emph{optimal instrument
under efficient weighting} and the instrument without $1/\skedas(\wi)$ as the
\emph{optimal instrument under identity weighting}.%

Under identity-weighting, estimating $\Opt$ amounts to learning $\condmean
(\wi) \defeq \E[\di \mid
\wi]$, which is well-suited
to machine learning techniques; this is only slightly complicated by the
estimation of $\skedas(\wi) \defeq \E[\ui^2 \mid \wi]$ under efficient
weighting. One might worry that the preliminary estimation of $\Opt$ complicates
asymptotic analysis of $\hat \theta_\n$. Under a simple sampling-splitting
scheme, however, we state a high-level condition for consistency, normality, and
efficiency of $\hat \theta_\n$. Though it simplifies the proof and potentially
weakens regularity conditions, sample-splitting does reduce the amount of data
used to estimate the optimal instrument $\Opt$, but such problems can be
effectively mitigated by $k$-fold sample-splitting: 20-fold sample-splitting,
for instance, limits the loss of data to $5\%$ at the cost of 20 computations
that can be effectively parallelized. Such concerns notwithstanding, we focus
our exposition to two-fold sample-splitting.

Specifically, assume $\n=2\nh$ for simplicity and let $\sample_1, \sample_2
\subset [\n]$ be the two subsamples with size $\nh$. Under
identity-weighting, for $j \in \{1,2\}$, form
$\estOptj$ by estimating $\condmean(\wi)$ with data from the
other sample, $\sample_{-j}$. An estimator for $\condmean$ may be a neural
network or a random forest trained via empirical risk minimization, or a
penalized linear regression such as elastic net.\footnote{With $k$-fold
sample-splitting, $S_{-j}$ is the union of all sample-split folds other than the
$j$-th one.} The estimated instrument $\estOpti$ is
then formed by evaluating $\estOptj (\wi)$ for all $ i
\in S_j$. We may then use \eqref{eq:plugin} to form an (identity-weighted)
estimator of $\theta$ by plugging in $\estOpt$. Under efficient weighting, on
each $\sample_{-j}$, we would use the identity-weighted estimator of $\theta$ as
an initial estimator to obtain an estimate of $\ui$, and similarly predict
$\ui^2$ with $\wi$ to form an estimate of $\skedas(\wi)$. The resulting
estimated optimal instrument under efficient weighting may then be plugged into
\eqref{eq:plugin} and form an efficient-weighted estimator. We term such
estimators the machine learning split-sample (MLSS) estimators. The pseudocode for
major procedures considered in this paper is collected in \cref{algo}.

\Cref{thm:simple} shows that the MLSS estimator is consistent and asymptotically
normal when the first-stage estimator $\estOptj$ converges to  a strong
instrument. Moreover, it is semiparametrically efficient when $\estOptj$ is
consistent for the optimal instrument $\Opt(\wi) \defeq [1, \condmean
(\wi)^\t]/\skedas(\wi)$ in $L^2(W)$ norm. The $L^2$ consistency
condition\footnote{In many-instrument settings under
\cite{bekker1994alternative}-type asymptotic sequences, there may be no
consistent estimator of the optimal instrument in the absence of sparsity
assumptions \citep{raskutti2011minimax}.} is not strong---in particular, it is
weaker than the $L^2$ consistency at $o(N^ {-1/4})$-rate conditions commonly
required in the  double machine learning and semiparametrics literatures
\citep{chernozhukov2018double},\footnote{We are not claiming that the MLSS
procedure has any advantage over the double machine learning literature, but
simply that the statistical problem here is sufficiently well-behaved such that
we enjoy weaker conditions than is typically required.} where such conditions
are considered mild.\footnote{The nuisance parameter $\E[\di \mid \wi]$ in this
setting enjoys higher-order orthogonality property described in
\cite{mackey2018orthogonal}. In particular, it is infinite-order orthogonal,
thereby requiring no rate condition to work. Intuitively, estimation error in
$\limitOpt(\cdot)$ has no effect on the moment condition $\E[\limitOpt (\yi -
\alpha - \ti^\t \theta)] = 0$ holding, and this feature of the problem makes the
estimation robust to estimation of $\limitOpt$.}

Formally, regularity conditions are stated in \cref{cond:simple}. The first
condition simply states that the nuisance estimation attains \emph{some} limit
as sample size tends to infinity, which is a similar requirement as \emph{sampling
stability} in \cite{lei2018distribution}. The second condition states that the
limit is a strong instrument. The third condition assumes bounded moments so as
to ensure a central limit theorem. The last condition, which is only required
for semiparametric efficiency, states that the nuisance estimation is consistent
for the optimal instrument in $L^2$ norm. For consistency of standard error
estimates, we assume more bounded moments in \cref{as:variance}.
\begin{as}
\label{cond:simple}
Recall that $\zi = [1, \wi^\t]$, and so $\limitOpt(\wi)$ and $\limitOpt
(\zi)$
denote the same object.
\begin{enumerate}
    \item ($\estOptj$ attains a limit $\limitOpt$ in $L^2$ distance) There
    exists some measurable function $\limitOpt(\zi)$ such
    that \[
    \E{\norm{\estOptj(\zi) - \limitOpt(\zi)}^2 } \to 0 \quad \text{ for $j
    = 1,2$,}
    \]
    where the expectation integrates over both the randomness in $\estOptj$
    and in $\zi$, but $\estOptj$ and $\zi$ are assumed to be
    independent. 
    
    \item (Strong identification) The matrix $\G \defeq \E[\limitOpt(\zi)
    \ti^\t]$ exists and is full
    rank.
     
    \item (Lyapunov condition) (i) For some $\epsilon > 0$, the following
    moments
    are finite:
    $\E|\ui|^{2+\epsilon} < \infty$, 
    $\E\norm{\ui \limitOpt(\zi)}^{2 + \epsilon} <
    \infty$, $\E[
  \norm{\ti}^2] < \infty$,
  and (ii) The variance-covariance matrix $\Omega \defeq \E[\ui^2 \limitOpt
  (\zi)\limitOpt(\zi)^\t]$
  exists, and (iii) the conditional variance is uniformly bounded: For some
  $M$, $\E[\ui^2 \mid \zi] < M < \infty$ a.s.
  \item (Consistency to the optimal instrument) We may
  take the optimal instrument $\Opt(\zi)$ as the limit $\limitOpt(\zi)$ in
  condition 1.
\end{enumerate}
\end{as}

\begin{as}[Variance estimation]
\label{as:variance}
Let $\limitOpt$ be the object defined in \cref{cond:simple}. Assume that the following fourth moments are bounded:
    \[\max\br{\E[\norm{\ti}^4], \E[\ui^4], \E\norm{\limitOpt(\zi)}^4,
    \limsup_
    {\n\to\infty} \E[\norm{\estOptj(\zi)} ^4]} <
    \infty.\]
\end{as}

\begin{theorem}
\label{thm:simple}
  Let $\mlssn$ be the MLSS estimator
  described above.  Under conditions 1--3 in \cref{cond:simple}, \[
 \sqrt{\n} \pr{\mlssn - \theta} \wto
  \Norm(0, V) \quad\quad V \defeq (\G\VS^{-1}\G^\t)^{-1} = \G^{-1} \VS \G^
  {-\t},
  \]
  where $\G, \VS$ are defined in \cref{cond:simple}. Moreover, if condition
  4 in \cref{cond:simple} holds, then the asymptotic variance $V$ attains
  the semiparametric efficiency bound. Moreover, if we additionally assume 
  \cref{as:variance}, then the sample counterparts of $\G, \VS$ are
consistent for the two matrices. 
\end{theorem}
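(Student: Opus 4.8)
The plan is to exploit the exact linearity of the estimator together with the independence that sample-splitting creates. Substituting $\yi = \ti^\t\theta + \ui$ into \eqref{eq:plugin} gives
\[
\sqrt{\n}\pr{\mlssn - \theta} = \hat\G_\n^{-1} A_\n, \qquad \hat\G_\n \defeq \Pn \estOpti \ti^\t, \qquad A_\n \defeq \rnPn \estOpti \ui,
\]
where $\estOpti = \estOptj(\wi)$ for $i \in \sample_j$. I would establish $\hat\G_\n \to_p \G$ and $A_\n \wto \Norm(0,\VS)$ separately and conclude by Slutsky's theorem and the continuous mapping theorem, the latter valid because $\G$ is invertible under the full-rank Condition~2. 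This produces the sandwich form $V = \G^{-1}\VS\G^{-\t}$; the expression $(\G\VS^{-1}\G^\t)^{-1}$ coincides with it once $\G$ is shown to be symmetric, which happens precisely at the optimal instrument.

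For the denominator I would split the average over the two folds and decompose $\estOptj = \limitOpt + (\estOptj - \limitOpt)$. The $\limitOpt$ piece is an i.i.d.\ average with mean $\G$, so the law of large numbers yields $\tfrac{1}{\nh}\sum_{i\in\sample_j}\limitOpt(\wi)\ti^\t \to_p \G$, the relevant moments being finite by Condition~3. For the remainder, conditioning on $\sample_{-j}$ freezes $\estOptj$ as a fixed function independent of the fold-$j$ data, so Cauchy--Schwarz bounds its conditional $L^1$ norm by $(\E[\norm{\estOptj(\zi)-\limitOpt(\zi)}^2\mid\sample_{-j}])^{1/2}(\E\norm{\ti}^2)^{1/2}$; taking expectations, applying Jensen, and invoking the $L^2$-consistency of Condition~1 sends this to zero, so the remainder is $o_p(1)$ by Markov.

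The numerator is the heart of the argument. Splitting by fold and decomposing as above, the $\limitOpt$ piece is $\rnPn \limitOpt(\wi)\ui$, whose summands are i.i.d., mean zero by $\E[\ui\mid\wi]=0$, with variance $\VS$; the Lyapunov moment in Condition~3(i) then delivers $\Norm(0,\VS)$ through the central limit theorem. The delicate piece is $\tfrac{1}{\sqrt{\nh}}\sum_{i\in\sample_j}(\estOptj(\wi)-\limitOpt(\wi))\ui$, and here sample-splitting is essential: conditional on $\sample_{-j}$ the summands are i.i.d.\ and, crucially, \emph{mean zero}, since $\E[\ui\mid\wi,\sample_{-j}] = \E[\ui\mid\wi] = 0$ because the fold-$j$ observations are independent of $\sample_{-j}$. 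Hence the conditional second moment of the sum equals that of one summand, which by the bounded conditional variance of Condition~3(iii) is at most $M\,\E[\norm{\estOptj(\zi)-\limitOpt(\zi)}^2\mid\sample_{-j}]$; its unconditional expectation vanishes by Condition~1, so the piece is $o_p(1)$ by Markov. This is exactly where mean-independence plus splitting substitutes for the $o(\n^{-1/4})$ rate conditions of the double machine learning literature, and I expect stating this step cleanly to be the main obstacle. Summing over folds and combining with the denominator limit completes the asymptotic normality.

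For efficiency, under Condition~4 I would set $\limitOpt = \Opt = [1,\condmean(\wi)^\t]^\t/\skedas(\wi)$ and compute directly, using $\E[\ti\mid\wi]=[1,\condmean(\wi)^\t]^\t$ and $\E[\ui^2\mid\wi]=\skedas(\wi)$, that both $\G$ and $\VS$ equal the same symmetric matrix $\E[\skedasroot^{-2}(\wi)[1,\condmean^\t]^\t[1,\condmean^\t]]$; therefore $V = \G^{-1}\VS\G^{-\t} = \G^{-1}$, which is exactly Chamberlain's \citep{chamberlain1987asymptotic} semiparametric bound $\{\E[\E[\ti\mid\wi]\skedasroot^{-2}(\wi)\E[\ti^\t\mid\wi]]\}^{-1}$. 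Finally, for consistency of the variance estimators, $\hat\G_\n\to_p\G$ is already shown; for $\hat\VS_\n = \Pn \hat\ui^2\,\estOpti\estOpti^\t$ with residuals $\hat\ui = \yi - \ti^\t\mlssn$, I would expand $\hat\ui^2-\ui^2$ in powers of $\mlssn-\theta = O_p(\n^{-1/2})$ and expand $\estOpti\estOpti^\t - \limitOpt\limitOpt^\t$ via $L^2$-consistency, controlling each cross term by Hölder's and Cauchy--Schwarz inequalities. This is exactly where the fourth-moment bounds of \cref{as:variance} enter, after which the leading term converges to $\VS$ by the law of large numbers.
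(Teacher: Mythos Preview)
Your proposal is correct and follows essentially the same route as the paper: the same linearization, the same fold-wise decomposition $\estOptj = \limitOpt + (\estOptj - \limitOpt)$, Cauchy--Schwarz plus Condition~1 for the denominator remainder, and the key observation that sample-splitting together with $\E[\ui\mid\wi]=0$ makes the numerator remainder conditionally mean-zero with conditional variance bounded by $M\,\E[\norm{\estOptj-\limitOpt}^2\mid\sample_{-j}]\to 0$. The paper packages these two remainder bounds into a separate lemma and simply cites \cite{chamberlain1987asymptotic} for the efficiency claim rather than computing $\G=\VS$ explicitly as you do, but the substance is identical.
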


\begin{proof}[Proof of \cref{thm:simple}]
We may compute that the scaled estimation error is\[ 
    \sqrt{\n} \pr{\mlssn - \theta} = \pr{\Pn \hat\limitOpt(\zi)
    \ti^\t}^{-1}
    \frac{1}{\sqrt{\n}}\sum_{i=1}^\n \hat\limitOpt(\zi) U_i.
    \]
We verify in \cref{lemma:expansion} that, for $\limitOpt$ defined in
condition
1 of
\cref{cond:simple}, the following expansions hold:
\begin{equation}
    \label{eq:oracle}
    \hat \G \defeq \Pn \hat\limitOpt(\zi) \ti^\t = \Pn \limitOpt(\zi)
    \ti^\t + o_p(1) \qquad \frac{1}{
\sqrt{\n}}\sum_{i=1}^\n \hat\limitOpt(\zi) U_i = \frac{1}{\sqrt{\n}}\sum_ {i=1}^\n
\limitOpt(\zi) \ui + o_p(1).
\end{equation}
Expansion \eqref{eq:oracle} implies that $\mlssn$ is first-order equivalent
to the oracle estimator that plugs in $\limitOpt$: \[\hat \theta^\star_\n
\equiv \pr{\Pn \limitOpt(\zi)
    \ti^\t}^{-1}
    \Pn \limitOpt(\zi) Y_i,\]
     whose consistency and asymptotic normality follows from usual
     arguments under condition 3 of \cref{cond:simple}.
Given \eqref{eq:oracle}, then we have a law of large numbers $\Pn \hat\limitOpt(\zi)
\ti^\t \pto \G$ by condition
2 of \cref{cond:simple};
 and we obtain a central limit theorem $
\frac{1}{
\sqrt{\n}}\sum_
{i=1}^\n \limitOpt(\zi) \ui \wto \Norm(0, \Omega)$ by condition 3. Lastly, by
Slutsky's theorem and the fact that $\G$ is nonsingular, we obtain the
desired convergence $\smash{\sqrt{\n}
\pr{\mlssn
- \theta}  \wto \Norm(0,V)}$. 

If, additionally, we assume the consistency condition 4, then
$\hat\theta^\star_\n$ is exactly the efficient optimal instrument estimator
\citep{chamberlain1987asymptotic}, and hence $V$ attains the semiparametric
efficiency bound.
Finally, \eqref{eq:oracle} implies
that $\hat \G \pto \G$ via a weak law of large numbers, and
\cref{lemma:variance} implies $\hat \VS \defeq \Pn
(\yi - \ti^\t\mlssn)^2 \estOpti\estOpti^\t \pto \VS$, and thus the variance
can be consistently estimated.
\end{proof}

\subsection{Exogenous covariates}

The presence of covariates $\covi$ complicates the analysis considerably. Under
the researcher's model, both $\wi$ and $\covi$ are considered exogenous, and
thus we may assume $\E[\ui \mid \zi] = 0$ and use it as a conditional moment
restriction, under which the efficient instrument is $\var(\ui \mid \zi)^ {-1}
\E[\ti \mid \zi]$ and our analysis from the previous section continues to
apply \emph{mutatis mutandis}. However, if the researcher maintains a linear
specification $\yi = \ti^\t \theta + \ui$, estimating $\theta$ based on the
conditional moment restriction $\E[\ui \mid \zi] = 0$ may inadvertently
``identify'' $\theta$ through nonlinear behavior in $\covi$ rather than the
variation in $\wi$. Such a specification may allow the researcher to precisely
estimate $\theta$ even when the instrument $\wi$ is completely irrelevant, when,
say, higher-order polynomial terms in the scalar $\covi$, $\covi^2, \covi^3$,
are strongly correlated with $\di$, perhaps due to misspecification of the
linear moment condition. There may well be compelling reasons why these
nonlinear terms in $\covi$ allow for identification of $\tau$ under an economic
or causal model; however, they are likely not the researcher's stated source of
identification, and allowing their influence to leak into the estimation
procedure undermines credibility of the statistical exercise.

One idea to resolve such a conundrum is to make the structural function
nonparametric as well, and convert the model to a nonparametric instrumental
variable regression
\citep{newey2003instrumental,ai2003efficient,ai2007estimation,ai2012semiparametric,chen2012estimation}
(See \cref{asec:npiv} for  discussion).\footnote{More recently,
\cite{chernozhukov2018double} derive Neyman-orthogonal moment conditions
assuming a partially linear second stage in Section 4.2 of their paper.} Another
idea, which we undertake in this paper, is to weaken the  moment condition and
rule nonlinearities in $\covi$ as inadmissible for inference.

To that end, we analyze the statistical restrictions implied by the
model and consider relaxations. The conditional moment restriction $\E[\ui
\mid \zi] = 0$ is equivalent to the following orthgonality constraint
\begin{equation}
\label{eq:cmm}
  \text{For all (square integrable) $\limitOpt$},\, \E[\limitOpt(\wi, \covi)
(\yi-\ti^\t
\theta)] = 0.
\end{equation}
Condition \eqref{eq:cmm} is too strong, since it allows nonlinear transforms of
$\covi$ to be
valid instruments. A natural idea is to restrict the class of allowable
instruments $\limitOpt (\wi, \covi)$ to those that are partially linear in
$\covi$, $\limitOpt(\wi, \covi) = h(\wi) + \covi^\t\ell$, thereby deliberately
discarding information from nonlinear transformations of $\covi$. Doing so
yields the following family of orthogonality constraints:
\begin{equation}
    \text{For all (square integrable) $\limitOpt$},\, \E[\limitOpt(\wi)
(\yi-\ti^\t \theta)] = \E[\covi (\yi - \ti^\t \theta)] = 0.
\label{eq:restriction}
\end{equation}
We may view \eqref{eq:restriction} as imposing an orthogonality condition
on the structural errors $\ui$ that is intermediate between that of TSLS
and that of \eqref{eq:cmm}. In particular, if we
define $\Epl[\cdot \mid
\covi, \wi]$ as a projection operator that projects onto partially linear
functions of $(\covi, \wi)$: \[
\Epl[\ui \mid \covi, \wi]\defeq \argmin_{\substack{h(\covi, \wi) \\ h
(\covi,
\wi) =
\covi^\t \ell + g(\wi)}} \E\pr{\ui - h(\covi, \wi)}^2,
\]
then requiring \eqref{eq:restriction} is equivalent to requiring
orthogonality under this partially linear projection operator:
\begin{equation}
\Epl[\ui \mid \covi, \wi] = 0.
\label{eq:pl_restriction}
\end{equation} In contrast, the $\cov(\ui, \zi) = 0$ orthogonality
requirement of TSLS can be written as $\El [\cdot \mid \zi] = 0$, where
$\El[\cdot \mid \zi]$ is analogously defined as a projection operator onto
\emph{linear} functions of $\zi$. We see that \eqref{eq:pl_restriction} is a
natural interpolation between the respective orthogonality structures on the
errors $\ui$ induced by the TSLS and the conditional moment restrictions.

The moment restriction corresponding to
\eqref{eq:restriction} is the following \emph{sequential moment restriction}
\begin{equation}
  \E[\covi (\yi - \ti^\t \theta)] =  \E[\yi - \ti^\t \theta \mid \wi] = 0.
\label{eq:seqmom}
\end{equation}
We see that \eqref{eq:seqmom} is a natural interpolation between 
the usual
unconditional moment
condition, $\E[\zi \ui] =
0$, and the conditional moment restriction that may be spurious $\E[\ui
\mid \zi] =0$, by only allowing nonlinear information in $W_i$ to be
used for estimation and inference.

Having set up the estimation problem as (equivalently) characterized by 
\eqref{eq:restriction},
\eqref{eq:pl_restriction}, or \eqref{eq:seqmom}, efficient estimation is
discussed by 
\cite{chamberlain1992comment}. In particular, the optimal instrument under
identity weighting  takes the convenient
form 
\begin{equation}
    \Opt(\zi) = \Epl\bk{\ti \mid \covi, \wi} = \colvecb{3}{1}{\Epl 
\bk{\di \mid \covi, \wi}}{\covi}, 
\label{eq:opt_covariate_homo}
\end{equation}
which is simply $(1, \covi)$, along with the best \emph{partially
linear prediction} of the endogenous
treatment $\di$ from $\wi, \covi$. Observe that the only difference between
\eqref{eq:opt_covariate_homo} and
\cite{chamberlain1987asymptotic}'s optimal instrument under
homoskedasticity is modifying $\E$ into $\Epl$. 
Implementing 
\eqref{eq:opt_covariate_homo} is straightforward, as by 
\cite{robinson1988root},\footnote{Moreover, it is easy to impose partial
linear structure on certain estimators, including series regression and
feedforward neural
networks, and in those cases we may minimize squared error directly
without Robinson's transformation.} partially linear regression is
reducible
to
fitting two nonparametric regressions $\E[\di \mid \wi]$ and $\E[\covi
\mid \wi]$, and forming the following prediction function \[
\Epl\bk{\di \mid \covi, \wi} = \E[\di \mid \wi] + \El[\di \mid 
\pr{\covi - \E[\covi \mid \wi]}].
\]

\begin{table}[tb]
    \caption{List of nonparametric nuisance parameters that require
    estimation. Note that nuisance parameters that require the unobserved
    error $\ui$ require additional preliminary consistent estimators of
    $\theta$.}
    \label{tab:nuisance}
    \vspace{1em}
    \centering
    
    \begin{tabular}{ccc}
    \toprule 
        Covariates $\covi$ & Identity weighting? & Nonparametric
        nuisance
        parameters \\ 
        \midrule 
         No  & Yes & $\E[\di \mid \wi]$ \\ 
         No  & No & $\E[\di \mid \wi]$, $\E[\ui^2 \mid \wi]$ \\
         Yes & Yes & $\E[\di \mid \wi]$, $\E[\covi \mid \wi]$ \\
         Yes & No & $\E[\di \mid \wi]$, $\E[\covi \ui^2 \mid \wi]$, $\E
         [\ui^2 \mid \wi]$
         \\ 
        \bottomrule
    \end{tabular}
\end{table}

Efficient estimation in the heteroskedastic case is more complex. 
The optimal instrument is the vector \[
\Opt(\zi) = \frac{\E[\ti \mid \wi]}{\skedas(\wi)} + \E\bk{\ti \tilde
\covi^\t} \E
    \bk{\ui^2 \tilde \covi \tilde \covi^\t}^{-1}\tilde \covi  , \quad \tilde \covi \defeq \covi - \frac{\E[\covi \ui^2 \mid
    \wi]}
    {\skedas(\wi)}
\]
and the associated set of unconditional moment restrictions are \begin{equation}
    \E\bk{ \ui\cdot \Opt(\zi)
    } = 0.
    \label{eq:opt_cov_hetero}
\end{equation}
The intuition for \eqref{eq:opt_cov_hetero} is the following: the two
moment conditions $\E[\ui \covi] = \E[\ui \mid \wi] = 0$ provide
non-orthogonal information for $\theta$ that prevents us from applying the
optimal instrument on each moment condition. However, we may orthogonalize
one against the other.\footnote{Orthogonal here does not refer to Neyman orthogonality \citep{chernozhukov2018double}, but simply means that the two moments are uncorrelated.} In particular, the moment condition $\E[\tilde \covi
\ui] = 0$ is orthogonal to $\E[\ui \mid \wi]$ in the sense that $\E[\tilde \covi
\ui \cdot \ui \mid \wi] = 0$. Indeed, the term $\frac{\E[\covi \ui^2 \mid
    \wi]} {\skedas(\wi)}U_i \defeq \frac{\ip{\covi\ui, \ui}}{\ip{\ui,
\ui}} \ui $ is constructed to be the projection of $\covi \ui$
onto $\ui$ under the inner product $\ip{A,B} = \E[AB \mid \wi]$. 

As before, complications in nonparametric estimation can be avoided by
sample splitting, where nuisance parameters are estimated on $K-1$ folds of
the data and the moment condition is evaluated on the remaining fold.
As a summary across our settings, we collect the nuisance parameters that
require a first-step estimation in
\cref{tab:nuisance}. The estimator \[
\mlssn = \pr{\Pn \estOpt(\zi) \ti^\t }^{-1} \Pn \estOpt(\zi) \yi
\]
remains the same as \eqref{eq:plugin} and is subjected to the same
analysis in \cref{thm:simple}---under conditions 1--3 in 
\cref{cond:simple}, $\mlssn$
is consistent and asymptotically normal, and additionally, it is
semiparametrically efficient if the $L^2$-limit of $\estOpt$ coincides with
the optimal instrument $\Opt$ in their respective settings.

Lastly, we make two remarks about the case with exogenous
covariates, assuming identity weighting for tractability. First, it is possible
for $\E[\di \mid \wi] = 0$ and for \eqref{eq:opt_covariate_homo} to generate
precise estimates of the coefficient $\tau$. The reason is that it is possible
for the partially linear specification $\di = h(\wi) +
\covi^\t \ell + V_i$ to generate nonzero $h (\wi)$ but zero conditional
expectation, in much the same way that some regression coefficients may be zero
without adjusting for $\covi$, but nonzero when adjusted for $\covi$. Whether or
not this makes $\wi$ a plausibly exogenous and strong instrument is likely to be
context specific. A robustness check may be generated by replacing $\Epl[\di
\mid \covi, \wi]$ with $\E[\di \mid \wi]$, which delivers consistent and
asymptotically normal estimates (assuming strong instrument) at the cost of
efficiency. Second, a Frisch-Waugh-Lovell- or double machine learning-like 
procedure of first partialling out $\covi$ from $\yi,
\di$ and then treating \begin{equation}
    \yi - \El[\yi \mid \covi] = \tau\pr{\di - \El[\di \mid \covi]} + \ui \quad
\E[\ui \mid \wi] = 0
\label{eq:frisch-waugh}
\end{equation}
as a conditional moment restriction also delivers consistent and
asymptotically normal estimates. However, using the ``optimal'' instrument
for \eqref{eq:frisch-waugh}---the predicted residual $\E[D_i - \El[\di \mid
\covi] \mid \wi]$---\emph{does not} achieve semiparametric efficiency, since it
uses the information in the sequential moment restriction \eqref{eq:seqmom}
separately, without considering them jointly and orthogonalizing one against the
other, resulting in efficiency loss.

\begin{algorithm}[htb]
    \caption{Machine learning split-sample estimation and inference}
    \label{algo}
    \begin{algorithmic}
        \Require A subroutine PredictInstrument($\sample_{-j}, \sample_j$)
        that returns the estimated instrument $\{\estOpt(\zi) : i\in
        \sample_j\}$, where $\estOpt$ is
        a function of $\sample_{-j}$. 
        
         \vspace{1em}
        
        \Procedure{GenerateInstrument}{$K$, Data}
        \State Randomly split data into $\sample_1,\ldots,\sample_K$
        \For {$j$ in 1,\ldots, $K$} 
            \State $\estOptj \gets \mathrm{PredictInstrument}(\sample_{-j},
            \sample_j)$
        \EndFor
        \State Combine $\estOptj$ into $\estOpt$
        \State Return $\sample_1,\ldots,\sample_K, \estOpt$
        \EndProcedure
        
         \vspace{1em}
        
        \Procedure{MLSSEstimate}{$K$, Data}
        \State $\sample_1,\ldots,\sample_K, \estOpt \gets 
        \mathrm{GenerateInstrument}(K, \mathrm{Data})$
        
        \State For the full parameter vector, return \[\hat\theta = \pr{\Pn
        \estOpt(\wi) \ti^\t}^{-1} \pr{\Pn \estOpt(\wi)
    \yi}\] and variance estimate \[
    \hat V = \pr{\Pn \estOpti \ti^\t}\cdot \pr{\Pn (\yi - \ti^\t
    \hat\theta)^2 \estOpti \estOpti^\t}
    \cdot
    \pr{\Pn \estOpti \ti^\t}^{-\t}
    \]

        \State For the subvector $\hat\tau$, residualize $\estOpti, \yi,
        \di$ against $\covi$ to obtain $\tilde \upsilon_i, \tilde Y_i,
        \tilde D_i$, and compute $\hat\tau = \pr{\Pn \tilde \upsilon_i
        \tilde D_i^\t}^{-1} \pr{\Pn \tilde \upsilon_i\tilde Y_i}.$
        
        \Comment{\textbf{Assuming identity weighting}}
       
        \EndProcedure
        
        \vspace{1em}
        
        \Procedure{WeakIVInference}{$K$, Data, $\alpha$}
            
            \Comment{\textbf{Assuming identity weighting}}
        
            \Comment{Assuming a routine AndersonRubin($\alpha, 
            \mathrm{Data}$) that
            returns the $1-\alpha$ Anderson--Rubin CI}
            
            \State $\sample_1,\ldots,\sample_K, \estOpt \gets 
        \mathrm{GenerateInstrument}(K, \mathrm{Data})$
        
            \For {$j$ in 1,\ldots, $K$} 
            \State On $\sample_j$, residualize $\estOptj(\zi), \yi,
        \di$ against $\covi$ to obtain $\tilde \upsilon_i, \tilde Y_i,
        \tilde D_i$
            \State $\mathrm{CI}_j \gets \mathrm{AndersonRubin}\pr{\alpha /
            K,
            \br{\tilde \upsilon_i, \tilde Y_i,
        \tilde D_i}_{i\in \sample_j}}$
            \EndFor 
            \State Return $\mathrm{CI} = \bigcap_j \mathrm{CI}_j$
        \EndProcedure
    \end{algorithmic}
\end{algorithm}

\section{Discussion}
\label{sub:disc}

\textbf{``Forbidden regression.''} Nonlinearities in the first stage are often
discouraged due to a ``forbidden regression,'' where the researcher regresses
$Y$ on \smash{$\hat D$} estimated via nonlinear methods, motivated by a
heuristic explanation for TSLS. As \cite{angrist2001instrumental} point out,
this regression is inconsistent, and consistent estimation follows from using
$\hat D$ as an instrument for $D$, as we do, rather than replacing $D$ with
$\hat D$---in the case where the first-stage is linear, the two estimates are
numerically equivalent, but not in general.

\paragraph{Interpretation under heterogeneous treatment effects.}   
Assume $\di$ is binary and suppose $\yi = \di \yi(1) + (1-\di)\yi(0)$. Suppose
the treatment follows a Roy model, $\di = \one
\pr{\condmean(\wi) \ge \vi}$, where $\vi \sim \Unif[0,1]$. In this
setting, the conditional moment restriction \eqref{eq:linIV} is
misspecified, since it assumes constant treatment effects, and different
choices of the instrument would generate estimators that converge to
different population quantities. Nevertheless, the results of
\cite{heckman2005structural} (Section 4) show that different choices of the
instrument generate estimators that estimate different weightings of
\emph{marginal treatment effects} (MTEs); moreover, optimal instruments,
whether under identity weighting or optimal weighting, correspond to convex
averages of MTEs, whereas no such guarantees are available for linear IV
estimators with $\wi$ as the instrument, without assuming that $\E[\di \mid
\wi]$ is linear. The weights on the MTEs are explicitly stated in 
\cref{asec:hte}.

\paragraph{Weak IV detection and robust inference.} 

A major practical motivation for our work, following \cite{bai2010instrumental},
is to use machine learning to rescue otherwise weak instruments due to a lack of
linear correlation; nonetheless, the instrument may be irredeemably weak, and
providing weak-instrument robust inference is important in practice. Relatedly,
\cite{xu2021weakiv} and \cite{antoine2019identification} also consider weak IV
inference with nonlinear first-stages; the benefits of split-sampling in the
presence of many or weak instruments are recently exploited by
\cite{mikusheva2020inference} and date to \cite{dufour2003identification},
\cite{angrist1999jackknife}, \cite{staiger1994instrumental}, and references
therein; \cite{kaji2019theory} proposes a general theory of weak identification
in semiparametric settings.

On weak IV detection, our procedure produces estimated optimal instruments,
which result in just-identified moment conditions. As a result, in models with
\emph{a single endogenous treatment variable}, the \cite{stock2005testing}
$F$-statistic rule-of-thumb has its exact interpretation\footnote{Namely, the
worst-case bias of TSLS exceeds 10\% of the worst-case bias of OLS
\citep{andrews2018weak}.} regardless of homo- or heteroskedasticity
\citep{andrews2018weak}, and the first stage $F$-statistic may be used as a tool
for detecting weak instruments.

Pre-testing for weak instruments distorts downstream inferences. Alternatively,
weak IV \emph{robust} inferences, which are inferences of $\tau$ that are valid
regardless of instrument strength, are often preferred. The procedure we
propose, under identity-weighting, is readily compatible with simple robust
procedures. In particular, on each subsample $\sample_j$, we may perform the
Anderson--Rubin test \citep{anderson1949estimation} and combining the results
across subsamples via Bonferroni correction. For a confidence interval at the
95\% nominal level with two-fold sample-splitting, this amounts to intersecting
two 97.5\%-nominal AR confidence intervals on each subsample, and these
confidence intervals may be computed using off-the-shelf software
implementations of AR confidence intervals.

More formally, consider the null hypothesis $H_0: \tau = \tau_0$. Consider
a Frisch--Waugh--Lovell procedure that partials out the covariates $\covi$.
Let the residuals be $\ui
(\tau_0)
\defeq \yi
-\di^\t \tau_0$, and let $\tildeui(\tau_0) \defeq \ui(\tau_0) -
 \tilde\delta^\t \tildexi$ be the residual $\ui
(\tau_0)$ after partialling out $\tildexi \defeq [1, \covi^\t]^\t$. Suppose the
estimated instrument takes the form $\estOpt(\zi) = [1,
\estopt(\zi)^\t,
\covi^\t]^\t$, where $\dim\estopt(\zi) = \dim \di$; this requirement is satisfied under
identity weighting. Similarly, partial out $\tildexi$ from the estimated
instrument to obtain $\estoptjtilde
\defeq \estoptj - \tilde \lambda^\t \tildexi$. 
Finally, consider the covariance $\vn
(\tau_0)$ between
the residual and the instrument, after partialling out the covariates
$\tildexi$, and let $\omegan$ be an estimate of $\vn$'s variance matrix:
i.e.,
\[
\vn(\tau_0) \defeq \frac{1}{\sqrt{\nh}} \sum_{i\in\sample_j}
\estoptj
(\zi)\tildeui(\tau_0) \text{ and } \omegan(\tau_0) \defeq \frac{1}
{\nh}\sum_{i\in \sample_j} \tildeui (\tau_0)^{2}
\estoptjtilde (\zi)\estoptjtilde (\zi)^\t.
\] 
The Anderson--Rubin
statistic on the $j$-th subsample is then defined as the normalized
magnitude of $\vn$: $\AR_j(\tau_0) \defeq \vn^\t
\omegan^{-1} \vn.$ Under $H_0$, by virtue of the exclusion restriction, we
should expect $\vn$ be mean-zero Gaussian, and thus $\AR_j$ should be
$\chi^2$.
Indeed, \Cref{thm:weakiv} shows that on each subsample,
under mild bounded moment conditions that ensure convergence
(\cref{as:weakiv}), $\AR_j(\tau_0)$ attains a limiting $\chi^2$
distribution. Under weak IV asymptotics, it is not necessarily the case that
the AR statistics are asymptotically uncorrelated across subsamples, and so
we resort to the Bonferroni procedure in outputting a single confidence
interval.
\begin{as}[Bounded moments for the AR statistic]
    \label{as:weakiv}
    Without loss of generality and normalizing if necessary, assume the
    estimated instruments are normalized: $\sum_
    {i\in\sample_j}\estoptj_k
    (\zi)^2 = 1$ for all $k = 1, \ldots,\dim \di$. 
    Let $\lambda_n \defeq \E[\tildexi
\tildexi]^{-1} \E
[\estoptj(\zi) \tildexi^\t \mid \estoptj]$ be the projection coefficient of
$\estoptj(\zi)$ onto $\tildexi$. 
Assume that with probability 1, the sequence $\estoptj =
    \estoptj_\nh$ satisfies the Lyapunov conditions
    \begin{enumerate}[label=(\roman*)]
        \item $\E[\ui^{4} \norm{\estoptj(\zi) -
    \lambda_n \tildexi}^
    {4} \mid \estoptj] < C_1 < \infty$ for some $C_1 > 0$
    \item $\E[\ui^2 (\estoptj(\zi) - \lambda_n \tildexi) (\estoptj
        (\zi)- \lambda_n \tildexi)^\t \mid \estoptj] \pto
    \Omega$. 
    \end{enumerate}
    Moreover, assume (iii) $\max\br{\E[\norm{\estoptj}^4], \E[\ui^4], \E
    [\norm{\covi}^4]} < C_2 < \infty$ and that (iv) $\E
    [\tildexi\tildexi^\t]$
    is invertible.
\end{as}
\begin{theorem}
\label{thm:weakiv}
    Under \cref{as:weakiv}, $\AR_j(\tau_0) \wto \chi^2_{\dim
    \di}$.
\end{theorem}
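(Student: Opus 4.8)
The plan is to condition on the estimated instrument $\estoptj$, which by sample splitting is a function of $\sample_{-j}$ and hence independent of the observations $\{(\yi,\di,\covi,\wi):i\in\sample_j\}$. For almost every realization of the estimation (\cref{as:weakiv} posits the Lyapunov conditions hold with probability one), the summands defining $\vn$ are then conditionally i.i.d. with $\estoptj$ treated as a fixed function. Since $\AR_j=\vn^\t\omegan^{-1}\vn$ is invariant to the scale of the instrument, I adopt the normalization of \cref{as:weakiv} and aim to establish conditional weak convergence $\AR_j\wto\chi^2_{\dim\di}$ for a.e. realization of $\estoptj$; because the limit law does not depend on that realization, integrating out $\estoptj$ by bounded convergence applied to the characteristic function upgrades this to the unconditional statement.

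The crux is an exact Frisch--Waugh--Lovell reduction of $\vn$. Because $\tildeui(\tau_0)$ is the in-sample residual of $\ui(\tau_0)$ on $\tildexi$, we have $\sum_{i\in\sample_j}\tildexi\tildeui=0$, and because $\estoptjtilde$ is the in-sample residual of $\estoptj$ on $\tildexi$, we have $\sum_{i\in\sample_j}\estoptjtilde\tildexi^\t=0$. Chaining these two orthogonalities rewrites $\vn=\frac{1}{\sqrt\nh}\sum_{i\in\sample_j}\estoptj(\zi)\tildeui=\frac{1}{\sqrt\nh}\sum_{i\in\sample_j}\estoptjtilde(\zi)\ui(\tau_0)$. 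Under $H_0:\tau=\tau_0$ the null residual decomposes as $\ui(\tau_0)=\ui+\tildexi^\t\eta$ with $\eta\defeq[\alpha,\beta^\t]^\t$, and a third use of $\sum\estoptjtilde\tildexi^\t=0$ annihilates the $\tildexi^\t\eta$ term, yielding the exact identity $\vn=\frac{1}{\sqrt\nh}\sum_{i\in\sample_j}\estoptjtilde(\zi)\ui$, a weighted sum of the true structural errors.

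Next I pass from $\estoptjtilde$ to the population-residualized instrument $g_i\defeq\estoptj(\zi)-\lambda_n\tildexi$. Writing $\estoptjtilde=g_i-(\tilde\lambda-\lambda_n)^\t\tildexi$, the discrepancy contributes $(\tilde\lambda-\lambda_n)^\t\frac{1}{\sqrt\nh}\sum\tildexi\ui$; here $\frac{1}{\sqrt\nh}\sum\tildexi\ui=O_p(1)$ by a CLT (using $\E[\tildexi\ui]=0$, which follows from $\E[\ui\mid\wi]=0$ and $\E[\covi\ui]=0$ in \eqref{eq:seqmom}), while $\tilde\lambda-\lambda_n=o_p(1)$ by the LLN under \cref{as:weakiv}(iii)--(iv), so the cross term is $o_p(1)$ and $\vn=\frac{1}{\sqrt\nh}\sum g_i\ui+o_p(1)$. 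I then verify the score is mean zero, $\E[g_i\ui\mid\estoptj]=0$. This is the step where partial-linearity of the instrument under identity weighting (see \eqref{eq:opt_covariate_homo}) is essential: $g_i$ equals the flexible part of $\estoptj$, a function of $\wi$ alone, minus its linear projection onto $\tildexi$, so $\E[g_i\ui\mid\estoptj]$ decomposes into $\E[\phi(\wi)\E[\ui\mid\wi]\mid\estoptj]$ plus multiples of $\E[\ui]$ and $\E[\covi\ui]$, all of which vanish under the sequential moment restriction even though the full restriction $\E[\ui\mid\zi]=0$ need not hold.

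With a mean-zero, conditionally-i.i.d. score in hand, a Lyapunov CLT---whose moment bound is \cref{as:weakiv}(i) and whose limiting covariance is $\Omega$ by (ii)---gives $\vn\wto\Norm(0,\Omega)$ conditionally; for the denominator I show $\omegan\pto\Omega$ by the same two-step replacement, substituting $\tildeui\to\ui$ (using $\tilde\delta\pto\eta$, again via $\E[\tildexi\ui]=0$) and $\estoptjtilde\to g_i$, controlling remainders with the fourth moments and applying the WLLN to $\frac{1}{\nh}\sum\ui^2 g_i g_i^\t$, whose limit is $\Omega$ by (ii). Slutsky and the continuous mapping theorem, with $\Omega$ nonsingular, then deliver $\AR_j\wto\chi^2_{\dim\di}$. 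I expect the main obstacle to be controlling the two approximation layers at once---in-sample versus population partialling-out, and the conditioning on the independently estimated $\estoptj$---while keeping the mean-zero property intact; the verification that residualizing the partially-linear instrument leaves a score orthogonal to $\ui$ under only the sequential moment restriction is the conceptual heart, and the FWL bookkeeping that makes the $\vn$ reduction exact rather than merely asymptotic is the most error-prone piece.
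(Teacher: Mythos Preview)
Your proposal is correct and follows essentially the same route as the paper: condition on $\estoptj$, reduce $\vn$ to $\frac{1}{\sqrt\nh}\sum_{i\in\sample_j}(\estoptj(\zi)-\lambda_n^\t\tildexi)\ui+o_p(1)$, apply a Lyapunov CLT under \cref{as:weakiv}(i)--(ii), and verify $\omegan\pto\Omega$ by expanding and bounding remainders with the fourth-moment conditions. Your use of the exact Frisch--Waugh--Lovell identity to obtain $\vn=\frac{1}{\sqrt\nh}\sum_{i\in\sample_j}\estoptjtilde(\zi)\ui$ before any asymptotics, and your explicit verification that $\E[g_i\ui\mid\estoptj]=0$ under only the sequential moment restriction (leaning on the partially-linear form of the identity-weighted instrument), are organizational refinements that the paper's terser proof leaves implicit, but the underlying argument is the same.
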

\begin{proof}
    We relegate the proof, which amounts to checking convergences $\vn
    \wto \Norm(0,\Omega)$ and $\omegan \pto \Omega$ under \cref{as:weakiv}, to
    the appendix.
\end{proof}

\paragraph{Connection between first-stage fitting and estimate quality.}
By using machine learning in the first stage, one may be able to improve the quality of the first-stage fit, as measured by out-of-sample $R^2$.
We now offer an argument as to why improving that fit may improve the mean
squared error of the estimator.

Consider a setting with no covariates\footnote{In the presence of covariates
under identity weighting, we may, without loss of generality, partial out the
covariates after estimating the optimal instrument.} and i.i.d. $\ui$ and an
estimated instrument $\estopt(\zi)$,\footnote{i.e. $\estOpt(\zi) = [1,
\estopt(\zi)]^\t.$} meant to approximate $\E[\di \mid \zi]$. In linear IV, $\estopt(\zi)$ is the linear projection of
$\di$ onto $\zi$. Define the \emph{extra-sample error}\footnote{See, for
instance, \cite{friedman2001elements} Section 7.4.} of an estimator $\hat\tau$
based on $\estopt (\zi)$ to be the random quantity \begin{align*}
\err(\estopt) &\defeq { n \cdot \pr{\frac{\cov_n (\estopt(\zi) , \yi)}{ \cov_n
   (\estopt(\zi),
   \di)} - \tau}^2 }
\end{align*} where $(\yi,\di,\zi)_ {i=1}^n$ is a new and independent sample
unrelated to the estimate $\estopt$, and we hold $\estopt$ fixed. The subscript
$n$ denotes sample quantities such as sample variances and covariances. The
quantity $\err (\estopt)$ is an optimistic measure of the quality of using
$\estopt$ as the instrument, as construction of $\estopt$ without
sample-splitting typically introduces a bias term since
$\cov(\estopt, \ui)$ cannot be assumed to be small. The following
calculation shows
that  $\err(\estopt)$ scales with the inverse out-of-sample $R^2$ of $\estopt$
as a predictor of $\di$:
\begin{align*}
\err(\estopt) &= { n \frac{\cov_n (\estopt(\zi) , \ui)^2}{ \cov_n (\estopt(\zi),
   \di)^2}
} \\ &= {
    \frac{1}{R^2_{n}\pr{\estopt(\zi), \di} } \cdot \var_n (\di)^{-1} \pr{
    \frac{1}
    {\sqrt{n}}\sum_{i=1}^n \frac{\estopt(\zi)}{\sqrt{\var_n (\estopt
    (\zi))}} (\ui - \bar U)}^2 
}\\
&\wto \frac{1}{R^2\pr{\estopt(\zi), \di}}\frac{\var(\ui)}{\var(\di)}
\cdot \chi^2_1 \quad \text{as $n \to \infty$},
\end{align*}
where $R^2_n$ is the out-of-sample $R^2$ of predicting $\di$ with
$\estopt(\zi)$, and $R^2(\cdot,\cdot)$ is its limit in
probability.\footnote{The $F$-statistic is a monotone transformation of
the $R^2$, which also serves as an indication of estimation quality.}

The out-of-sample $R^2$, which can be readily computed from a split-sample
procedure, therefore offers a useful indicator for quality of estimation. In
particular, if one is comfortable with the strengthened identification
assumptions, there is little reason not to use the model that achieves the best
out-of-sample prediction performance on the split-sample. In some settings, this
best-performing model will be linear regression, but in many settings it may
not be, and attempting more complex tools may deliver considerable
benefits.

Moreover, much of the discussion on using machine learning for
instrumental variables analysis focuses on \emph{selecting} (relevant or
valid) instruments
\citep{belloni2012sparse,angrist2019machine} assuming some level of sparsity,
motivated by statistical difficulties encountered when the number of instruments
is high.  In light of the heuristic above, a more precise framing is perhaps
\emph{combining} instruments to form a better prediction of the endogenous
regressor, as noted by \cite{hansen2014instrumental}.

\paragraph{(When) is machine learning useful?} We conclude this section by
discussing our work relative to \cite{angrist2019machine}, who note that using
lasso and random forest methods in the first stage does not seem to provide
large performance benefits in practice, on a simulation design based on the data
of \cite{angrist1991does}. We note that, per our discussion above in the
connection between first-stage fitting and estimate quality, a good heuristic
summary for the estimation precision is the $R^2$ between the fitted instrument
and the true optimal instrument---$\E[\di\mid \wi]$ in the homoskedastic case.
It is quite possible that in some settings, the conditional expectation $\E [\di
\mid \wi]$ is estimated well with linear regression, and lasso or random forest
do not provide large benefits in terms of out-of-sample prediction quality.
Since \cite{angrist1991does}'s instruments are quarter-of-birth interactions and
are hence binary, it is in fact likely that predicting $D$ with  linear
regression performs well relative to nonlinear or complex
methods\footnote{Indeed, in some of our  experiments calibrated to the
\cite{angrist1991does} data, using a simple gradient boosting method
(\texttt{lightgbm}) does not outperform linear regression in terms of
out-of-sample $R^2$ (0.039\% vs. 0.05\%).} in the setting. Whether or not
machine learning methods
work well relative to linear methods is something that the researcher may verify
in practice, via evaluating performance on a hold-out set, which is standard
machine learning practice but is not yet widely adopted in empirical economics.
Indeed, we observe that in both real (\cref{sec:emp}) and Monte Carlo
(\cref{asec:monte_carlo}) settings where the out-of-sample prediction quality of
more complex machine learning methods out-perform linear regression, MLSS
estimators perform better than TSLS.

\section{Empirical Application}
\label{sec:emp}

We consider an empirical application in the criminal justice setting of
\cite{deepivinlaw}, where we consider the causal effect of appellate court
decisions at the U.S. circuit court level on lengths of criminal sentences at
the U.S. district courts under the jurisdiction of the circuit court.
\cite{deepivinlaw} exploit the fact that appellate judges are randomly assigned,
and use the characteristics of appellate judges---including age, party
affiliation, education, and career backgrounds---as instrumental variables. In
criminal justice cases, plaintiffs rarely appeal, as it would involve trying the
defendant twice for the same offense---generally not permitted in the United
States; therefore, an appellate court reversal would
typically be in favor of defendants, and we may posit a causal channel in which
such reversals affect sentencing; for instance, the district court may be more
lenient as a result of a reversal, as would be naturally
predicted if the reversal sets up a precedent in favor of the defendant.

To connect the empirical setting with our notation, the outcome variable $Y$ is
the change in sentencing length before and after an appellate decision, measured
in months, where positive values of $Y$ indicates that sentences become longer
after the appellate court decision. The endogenous treatment variable $D$ is
whether or not an appellate decision reverses a district court ruling. The
instruments $W$ are the characteristics of the randomly assigned circuit judge
presiding over the appellate case in question, and covariates $X$ contain
textual features from the circuit case, represented by Doc2Vec embeddings
\citep{le2014distributed}.

We compute two estimators of the optimal instrument under identity weighting
based on flexible methods that are often characterized as machine learning
(random forest and LightGBM, for light gradient boosting machine), as well as a
variety of linear or polynomial regression estimators, with or without
sample-splitting. We present our results in \cref{fig:judge} (see the notes to the figure for precise definitions of each estimator). For all of the split-sample estimators we have three sets of point estimates and confidence intervals, corresponding to the different splits.  We also have specifications that exclude (panel (a)) and include covariates (panel (b)); except where discussed below the results are consistent across both.     

The machine learning estimators perform similarly across splits, reporting
mildly negative point estimates
of between $1$ and $2$ months reduction in sentencing, with confidence intervals that are reasonably tight but include a zero effect. Moreover, the Wald and Anderson--Rubin
confidence intervals are similar, suggesting that the instrument constructed by the ML methods is sufficiently strong
to result in inferences that are not distorted. 

A natural benchmark to compare these results to is TSLS, with the instruments
entering either linearly or quadratically (but not interacted). Linear TSLS
estimates a slightly positive effect without covariates and a slightly negative
one with them. Though it has a tight Wald confidence interval, the AR interval
is quite large ($[-3.3, 23.8]$). This indicates a weak instruments
problem, and indeed the first-stage $F$-statistic is only 1.5. 
Quadratic TSLS returns
a point estimate that is close to the MLSS estimates, with a very tight Wald
confidence interval; however, the corresponding Anderson--Rubin interval is
empty. Anderson--Rubin tests
test model misspecification jointly with point nulls of the structural
coefficient, and may report an empty interval if the model is
misspecified.\footnote{In our case, the endogenous treatment is binary,
and so the only source of model misspecification is heterogeneous treatment
effects. In that case, TSLS continues to estimate population objects
that are (possibly nonconvex) averages of marginal treatment effects, and
arguably researchers would nonetheless like non-empty confidence sets. One
benefit of split-sample approaches is that the power of the
Anderson--Rubin test is wholly directed to testing the structural parameter
rather than testing overidentification, since the estimated instrument always
results in a just-identified system. As a result, Anderson--Rubin intervals
under split-sample approaches will never be empty. Another benefit of our
approach, which
we discuss in \cref{sub:disc} and \cref{asec:hte}, is that MLSS
consistently estimates a convex average of marginal treatment effects
assuming the first-stage is consistent for the conditional mean of the
endogenous treatment on the exogenous instruments.}
Moreover, we should still be concerned with weak instruments: The
first-stage $F$-statistic is only $2.2$
(excluding covariates) and $2.3$ (including covariates).

The sample-splitting estimators based on traditional polynomial expansions rather than machine learning all perform poorly, with out-of-sample $R^2$ close to zero and consequently huge confidence intervals (the point estimates also vary wildly across splits). Overall, the MLSS estimators successfully extract more variation from the instruments than the alternatives, and consequently deliver more statistical precision.

\begin{figure}[tb]
  {\centering
    (a) Excluding covariates
    \includegraphics[width=\textwidth]{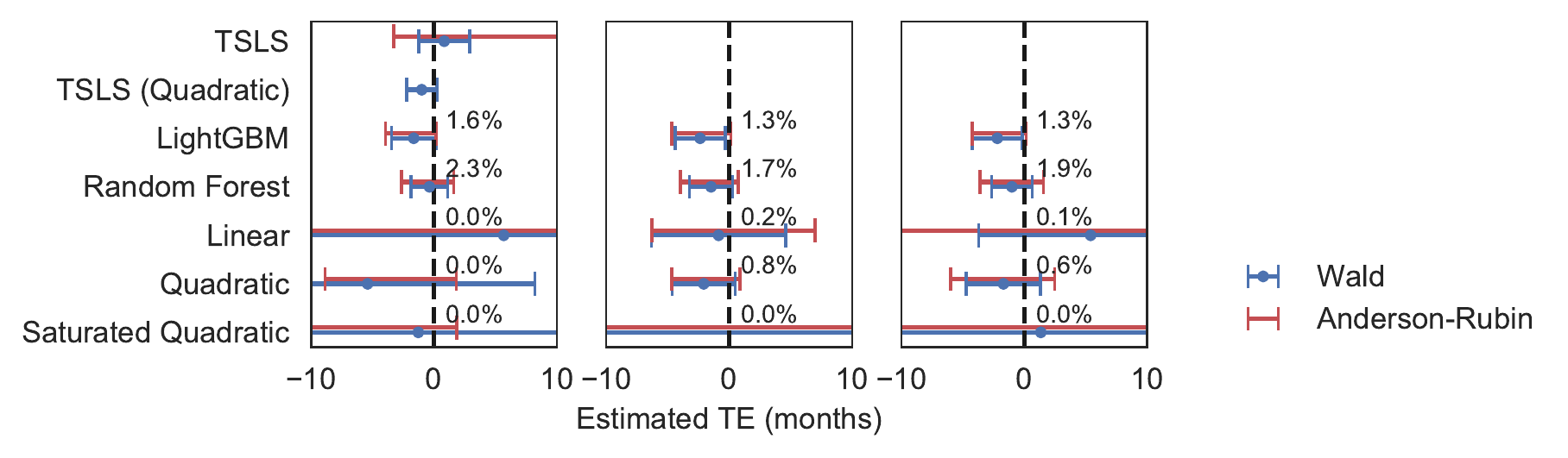}

    \vspace{1em}
    (b) Including covariates
    
    \includegraphics[width=\textwidth]{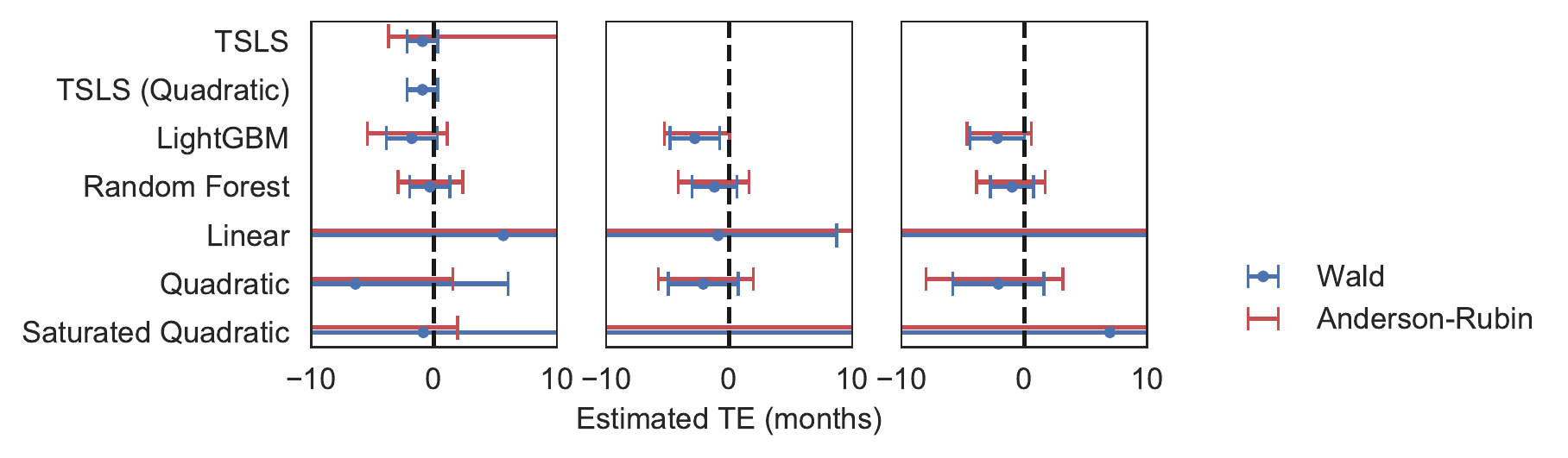}}
  
  Notes: Point estimate and confidence intervals across three sample
  splits, represented by the three horizontal panels. TSLS and TSLS 
  (Quadratic) are direct estimates without sample-splitting. Out-of-sample
  $R^2$ of instruments on endogenous treatment in annotation in panel (a).
  
  \quad \textsf{TSLS} is a standard TSLS estimator without sample
  splitting,
  using the instruments directly from the dataset.
  \textsf{TSLS (Quadratic)} includes second-order terms (but not
  interactions) for the instruments, again without sample-splitting---in
  particular, it results in an empty Anderson--Rubin interval.
  \textsf{LightGBM} and \textsf{RandomForest} are MLSS estimators, where
  LightGBM is an algorithm for gradient boosted trees. Finally, 
  \textsf{Linear}, \textsf{Quadratic}, and \textsf{Saturated Quadratic} are
  split-sample estimators with linear regression, quadratic regression 
  (without interactions), and quadratic regression with interactions as the estimators for the
  instrument, respectively.

  \caption{IV estimation of the effect of appellate court
  reversal on district court sentencing decisions}
  \label{fig:judge}
\end{figure}

\section{Conclusion}
In this paper, we provide a simple and user-friendly analysis of incorporating
flexible prediction into instrumental variable analysis in a manner that
is familiar to applied researchers. In particular, we document via elementary
techniques that a split-sample IV estimator with machine learning methods as the
first stage inherits classical asymptotic and optimality properties of usual
instrumental regression, requiring only weak conditions governing the
consistency of the first stage prediction. In the presence of covariates, we
also formalize moment conditions for instrumental regression that continues to
leverage nonlinearities in the excluded instrument without creating spurious
identification from the nonlinearities in the included covariates. Leveraging
such nonlinearity in the first stage allows the user to extract more identifying
variation from the instrumental variables and can have the potential of rescuing
seemingly weak instruments into strong ones, as we demonstrate with simulated
data and real data from a criminal justice context. Conventional components
of an instrumental variable analysis, such as identification-robust confidence
sets, extend seamlessly in the presence of a machine learning first stage. We
believe that machine learning in IV settings is a mostly harmless addition to
the empiricist's toolbox.

\newpage

\bibliographystyle{ecca}
\bibliography{main.bib}

\newpage
\appendix
\section{Technical lemmas and proofs}

\small

\begin{lemma}
\label{lemma:expansion}
    Under conditions 1 and 3 of \cref{cond:simple}, we have 
    \eqref{eq:oracle}.
\end{lemma}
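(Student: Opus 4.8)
The plan is to exploit the two-fold sample splitting throughout. For $i \in \sample_j$ we have $\estOpt(\zi) = \estOptj(\zi)$, where $\estOptj$ depends only on the complementary fold $\sample_{-j}$; hence conditioning on $\sample_{-j}$ freezes $\estOptj$ into a fixed measurable function that is independent of the fresh observations $\{(\zi,\ti,\ui):i\in\sample_j\}$. I would prove the two expansions in \eqref{eq:oracle} separately: the first by a crude Cauchy--Schwarz bound at the level of the empirical average together with ordinary laws of large numbers, and the second---the delicate one---by using the exact restriction $\E[\ui \mid \zi] = 0$ to annihilate the leading (bias) term, leaving only a conditional variance that vanishes with \emph{no} rate requirement.

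For the first expansion it suffices to show $\Pn \pr{\estOpt(\zi) - \limitOpt(\zi)} \ti^\t \pto 0$. I would bound its Frobenius norm via the outer-product identity $\norm{ab^\t} = \norm{a}\norm{b}$ and the discrete Cauchy--Schwarz inequality applied to the empirical measure,
\[
\norm{\Pn \pr{\estOpt(\zi) - \limitOpt(\zi)} \ti^\t} \le \sqrt{\Pn \norm{\estOpt(\zi) - \limitOpt(\zi)}^2}\;\sqrt{\Pn \norm{\ti}^2}.
\]
The second factor converges in probability to $\sqrt{\E\norm{\ti}^2} < \infty$ by the weak law of large numbers and the moment bound in condition 3. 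For the first factor, splitting the sum over the two folds and taking expectations, the independence of $\estOptj$ from $\sample_j$ gives $\E\bk{\Pn \norm{\estOpt(\zi) - \limitOpt(\zi)}^2} = \tfrac12 \sum_{j=1,2} \E\norm{\estOptj(\zi) - \limitOpt(\zi)}^2 \to 0$ by condition 1, so the factor is $o_p(1)$ by Markov's inequality and the product is $o_p(1)$.

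For the second expansion, write $\rnPn \pr{\estOpt(\zi) - \limitOpt(\zi)} \ui = \sum_{j=1,2} T_j$ with $T_j \defeq \tfrac{1}{\sqrt{\n}} \sum_{i \in \sample_j} \pr{\estOptj(\zi) - \limitOpt(\zi)} \ui$, and condition on $\sample_{-j}$. The crucial point is that each summand has conditional mean \emph{exactly} zero: by the tower property and $\E[\ui \mid \zi] = 0$, $\E\bk{\pr{\estOptj(\zi) - \limitOpt(\zi)}\ui \mid \estOptj} = \E\bk{\pr{\estOptj(\zi) - \limitOpt(\zi)}\E[\ui\mid\zi] \mid \estOptj} = 0$. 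Since the summands are conditionally i.i.d. and mean zero, the cross terms drop and $\E\bk{\norm{T_j}^2 \mid \estOptj} = \tfrac12\,\E\bk{\norm{\estOptj(\zi) - \limitOpt(\zi)}^2 \ui^2 \mid \estOptj} \le \tfrac{M}{2}\, \E\bk{\norm{\estOptj(\zi) - \limitOpt(\zi)}^2 \mid \estOptj}$, using $\E[\ui^2 \mid \zi] \le M$ from condition 3(iii). Taking total expectations and invoking condition 1 gives $\E\norm{T_j}^2 \to 0$, so $T_j \to 0$ in $L^2$ and hence $T_j = o_p(1)$; summing over $j$ completes the expansion.

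The main obstacle is the second expansion, and specifically the recognition that the elementary Cauchy--Schwarz bound that disposes of the first expansion is useless here: carrying it through would introduce a factor of $\sqrt{\n}$ and thus demand $\Pn \norm{\estOpt(\zi) - \limitOpt(\zi)}^2 = o_p(\n^{-1})$, a near-parametric mean-square rate we explicitly do not wish to assume. The resolution is structural rather than computational---the exact conditional moment restriction $\E[\ui \mid \zi] = 0$, combined with the independence afforded by sample splitting, annihilates the first-order contribution exactly, so only the conditional second moment survives, and that is controlled by the $L^2$-consistency of condition 1 with no accompanying rate. This is the ``infinite-order orthogonality'' alluded to in the discussion, and keeping the bias and variance contributions cleanly separated is the one place where care is genuinely required.
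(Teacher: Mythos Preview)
Your proposal is correct and follows essentially the same route as the paper: a Cauchy--Schwarz bound for the first expansion, and for the second expansion the observation that sample splitting plus $\E[\ui\mid\zi]=0$ makes each summand conditionally mean zero, so that only the conditional second moment survives and is controlled by $M\cdot\E\norm{\estOptj-\limitOpt}^2\to 0$. If anything, your account of the second expansion is slightly more explicit than the paper's in separating the bias and variance pieces and in invoking the tower property, but the structure and the key ideas are identical.
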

\begin{proof}
    We first consider the first statement. Observe that \[
\Pn \pr{\estOpt(\zi) - \limitOpt(\zi)}\ti^\t = \frac{1}{2} \sum_{j\in
\{1,2\}} \frac{1}{\nh} \sum_{i\in
\sample_j} \pr{\estOptj(\zi) - \Upsilon(\zi)}\ti^\t 
\]
We control the right-hand side, where $\norm{\cdot}_F$ is the Frobenius
norm:
\begin{align*}
    \norm[\bigg]{\frac{1}{\nh} \sum_{i\in
\sample_j} \pr{\estOptj(\zi) - \Upsilon(\zi)}\ti^\t}_F^2 &\le \pr{\frac{1}{\nh}\sum_{i\in \sample_j} 
\norm{\estOptj(\zi) - \Upsilon(\zi)} \cdot \norm{\ti}}^{2} \tag{$
\norm{AB}_F \le
\norm{A}_F \norm{B}_F$} \\ 
&\le \pr{ \frac{1}{\nh}\sum_{i\in \sample_j} \norm{\estOptj(\zi) - \Upsilon(\zi)}^2 } \pr{ \frac{1}{\nh}\sum_{i\in \sample_j} 
\norm{\ti}^2 } \tag{Cauchy--Schwarz} \\ 
&= O_p(1)   \frac{1}{\nh}\sum_{i\in \sample_j} \norm{\estOptj(\zi) -
\Upsilon(\zi)}^2   \tag{Since
$\E \norm{\ti^2} < \infty$ in condition 3} \\ 
& \pto 0
\end{align*}
The last step follows, because the nonnegative random variable $
\nh^{-1}\sum_{i\in \sample_j}
\norm{\estOptj(\zi) -
\limitOpt(\zi)}^2 \ge 0$ has expectation converging to zero by condition 1
of \cref{cond:simple}.
Therefore \[
\Pn \pr{\estOpt(\zi) - \limitOpt(\zi)}\ti^\t = \frac{1}{2} \sum_{j\in
\{1,2\}} \frac{1}{\nh} \sum_{i\in
\sample_j} \pr{\estOptj(\zi) - \limitOpt(\zi)}\ti^\t  = o_p(1). 
\]

We now consider the second statement. Again we may decompose
\[
\Pn \pr{\estOpt(\zi) - \limitOpt(\zi)}\ui = \frac{1}{2} \sum_{j\in
\{1,2\}} \frac{1}{\nh} \sum_{i\in
\sample_j} \pr{\estOptj(\zi) - \limitOpt(\zi)}\ui \equiv \frac{Q_1 + Q_2}{2}
\]
and show that $
\sqrt{\nh} Q_j \defeq \sqrt{\nh} \frac{1}{\nh} \sum_{i\in
\sample_j} \pr{\estOptj(\zi) - \limitOpt(\zi)}\ui \defeq \sqrt{\nh}\frac{1}{\nh}
\sum_{i\in \sample_j} \Delta_i
\ui = o_p(1),$ where we write $\estOptj(\zi) - \limitOpt(\zi) = \Delta_i$
as a shorthand. 

It suffices to show that $
\var\pr{    Q_j } = o(1)$, since $Q_j = \E Q_j + O_p(\sqrt{\var(Q_j)})$ and
$\E[Q_j] = 0$. Note that \begin{align*}
\var(Q_j) = \frac{1}{n} \sum_i \E[\Delta^2_i \cdot \E[U_i^2 \mid S_{-j},
\zi]] \le \frac{1}{n}\sum_i \E[\Delta_i^2] \cdot M \tag{Condition 3,
\cref{cond:simple}}
\end{align*}
which vanishes.
\end{proof}

\begin{lemma}
\label{lemma:variance}
    Under \cref{as:variance} and condition 3 of \cref{cond:simple}, $\hat
    \VS \defeq \Pn
    (\yi - \ti^\t\mlssn)^2
\estOpti\estOpti^\t \pto \VS$.
\end{lemma}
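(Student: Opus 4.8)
The plan is to reduce $\hat\VS$ to an i.i.d. average of $\ui^2\limitOpt(\zi)\limitOpt(\zi)^\t$ in two stages: first replacing the fitted residual $\yi - \ti^\t\mlssn$ by the true error $\ui$, and then replacing the fitted instrument $\estOpti$ by its $L^2$-limit $\limitOpt(\zi)$. Write the residual as $\yi - \ti^\t\mlssn = \ui - \ti^\t\hat\delta$ with $\hat\delta \defeq \mlssn - \theta$, which is $o_p(1)$ by \cref{thm:simple}. Expanding the square yields
\[
\hat\VS = \Pn \ui^2 \estOpti\estOpti^\t - 2\Pn \ui\,(\ti^\t\hat\delta)\,\estOpti\estOpti^\t + \Pn (\ti^\t\hat\delta)^2\,\estOpti\estOpti^\t \defeq \hat\VS_0 + R_1 + R_2.
\]
For $R_1$ I would bound its Frobenius norm by $2\norm{\hat\delta}\cdot\Pn |\ui|\,\norm{\ti}\,\norm{\estOpti}^2$ and note the average is $O_p(1)$, since by generalized H\"older and \cref{as:variance} we have $\E[|\ui|\,\norm{\ti}\,\norm{\estOpti}^2] \le (\E\ui^4)^{1/4}(\E\norm{\ti}^4)^{1/4}(\E\norm{\estOpti}^4)^{1/2} < \infty$ (here the factor $\limsup_\n \E\norm{\estOptj(\zi)}^4 < \infty$ is essential). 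Hence $R_1 = \norm{\hat\delta}\cdot O_p(1) = o_p(1)$, and bounding $R_2$ in Frobenius norm by $\norm{\hat\delta}^2\cdot\Pn \norm{\ti}^2\norm{\estOpti}^2$ gives $R_2 = o_p(1)$ in the same way. Thus $\hat\VS = \hat\VS_0 + o_p(1)$.

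It then remains to show $\hat\VS_0 \pto \VS$. I would decompose $\estOpti = \limitOpt(\zi) + \Delta_i$, where $\Delta_i \defeq \estOptj(\zi) - \limitOpt(\zi)$ for $i \in \sample_j$, so that
\[
\hat\VS_0 = \Pn \ui^2 \limitOpt(\zi)\limitOpt(\zi)^\t + \Pn \ui^2\bk{\limitOpt(\zi)\Delta_i^\t + \Delta_i\limitOpt(\zi)^\t + \Delta_i\Delta_i^\t}.
\]
The leading term is a genuine i.i.d. average of $\ui^2\limitOpt(\zi)\limitOpt(\zi)^\t$, whose entries have finite mean because $\E[\ui^2\norm{\limitOpt(\zi)}^2] \le (\E\ui^4)^{1/2}(\E\norm{\limitOpt(\zi)}^4)^{1/2} < \infty$ under \cref{as:variance}; the weak law of large numbers sends it to $\VS = \E[\ui^2\limitOpt(\zi)\limitOpt(\zi)^\t]$. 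Each of the three $\Delta_i$ remainder averages I would send to zero in Frobenius norm via Markov's inequality, using $\norm{\Delta_i\Delta_i^\t}_F = \norm{\Delta_i}^2$ and $\norm{\limitOpt(\zi)\Delta_i^\t}_F = \norm{\limitOpt(\zi)}\norm{\Delta_i}$.

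The main obstacle is precisely the control of these $\Delta_i$ terms, and it is here that the sample-splitting structure does the essential work: because $\estOptj$ is independent of $(\ui,\zi)$ for $i\in\sample_j$, condition 3(iii) of \cref{cond:simple} upgrades to $\E[\ui^2\mid\zi,\estOptj] = \E[\ui^2\mid\zi]\le M$, so that
\[
\E[\ui^2\norm{\Delta_i}^2] \le M\,\E\norm{\estOptj(\zi) - \limitOpt(\zi)}^2 \to 0
\]
by the $L^2$-consistency of condition 1, and similarly $\E[\ui^2\norm{\limitOpt(\zi)}\norm{\Delta_i}] \le M\,(\E\norm{\limitOpt(\zi)}^2)^{1/2}(\E\norm{\Delta_i}^2)^{1/2} \to 0$ by Cauchy--Schwarz. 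These bounds show each remainder average has expected Frobenius norm tending to zero, hence is $o_p(1)$, so $\hat\VS_0 \pto \VS$ and therefore $\hat\VS \pto \VS$. Without sample-splitting, $\estOptj$ and the observations in $\sample_j$ would be dependent, and the clean conditional-expectation manipulation that turns the problem into the $L^2$ statement $\E\norm{\Delta_i}^2\to 0$ would fail.
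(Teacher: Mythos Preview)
Your proof is correct and follows essentially the same two-stage reduction as the paper: first replace the fitted residual by $\ui$ using $\mlssn - \theta = o_p(1)$ and fourth-moment bounds, then replace $\estOpti$ by $\limitOpt(\zi)$ using sample-splitting together with $\E[\ui^2\mid\zi]\le M$ and the $L^2$ convergence $\E\norm{\Delta_i}^2\to 0$. The only cosmetic differences are that you split the residual remainder into $R_1,R_2$ and bound expectations via generalized H\"older, whereas the paper groups them and applies Cauchy--Schwarz at the sample-average level; you also make explicit the reliance on condition~1 of \cref{cond:simple}, which the paper's proof uses as well even though the lemma statement lists only condition~3.
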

\begin{proof}
    Observe that $(\yi - \ti^\t\mlssn)^2 = \ui^2 + (\theta - \mlssn)^\t 
    (\ti\ti^\t (\theta - \mlssn) + 2\ui\ti) \defeq \ui^2 +  (\theta -
    \mlssn)^\t\vi$, where we define $\vi = \ti\ti^\t (\theta - \mlssn) +
    2\ui\ti$ .
    Note that \begin{align*}
        \norm[\bigg]{(\theta - \mlssn)^\t\Pn \vi \estOpti\estOpti^\t}_F
        &\le  \norm{\theta - \mlssn} \cdot
    \Pn \norm{\vi}
    \norm{\estOpti}^2  \\ &\le o_p(1) \pr{\Pn \vi^2 \cdot \Pn 
    \norm{\estOpti}^4}^{1/2}
    \end{align*}
    $\Pn \norm{\vi}^2$ is $O_p(1)$ if $\norm{\ti}^4, \norm{\ui}^4$ have
    bounded expectations. $\Pn 
    \norm{\estOpti}^4$ is $O_p(1)$ since $\limitOpt$ has bounded fourth
    moments and so does the difference $\norm{\estOpt - \limitOpt}$. Thus
    \[
    \hat \Omega = \Pn \ui^2 \estOpti\estOpti^\t + o_p(1).
    \]
    
    Next, we may compute \[
            \norm[\bigg]{\Pn (\estOpti \estOpti^\t - \limitOpti
            \limitOpti^\t)
    \ui^2}_F \le \Pn 2\ui^2\norm{\limitOpti} \norm{\estOpti
    - \limitOpti} + \Pn \norm{\estOpti
    - \limitOpti}^2 \ui^2.
\]
    Note that the expectation of the second term on the right-hand side
    vanishes:
    \[
    \E[\norm{\estOpti
    - \limitOpti}^2 \E[U_i^2 \mid S_{-j}, \zi]] \le M \E[\norm{\estOpti
    - \limitOpti}^2] \to 0.
    \]
    Thus the second term is a nonnegative sequence with
    vanishing expectation, and is hence $o_p(1)$. To show that the first
    term is $o_p(1)$, it suffices to show that \[
    \E[\ui^2\norm{\limitOpti} \norm{\estOpti
    - \limitOpti}] = o_p(1).
    \]
    This is in turn true since, by condition 3 of \cref{cond:simple} and
    Cauchy--Schwarz \[
    \E[\ui^2\norm{\limitOpti} \norm{\estOpti
    - \limitOpti}] < M \cdot \sqrt{\E[\norm{\limitOpti}^2] \E[\norm{\estOpti
    - \limitOpti}^2]} = o_p(1). \qedhere
    \]
\end{proof}

\begin{theorem}[\cref{thm:weakiv} in the main text]
    Under \cref{as:weakiv}, $\AR_j(\tau_0) \wto \chi^2_{\dim
    \di}$.
\end{theorem}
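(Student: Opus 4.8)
The plan is to condition on the estimated instrument $\estoptj$. Since $\estoptj$ is a measurable function of the complementary fold $\sample_{-j}$, it is independent of the observations indexed by $\sample_j$; conditionally on $\estoptj$ the summands of $\vn$ and $\omegan$ are i.i.d.\ (the instrument is then a fixed function), so the problem reduces to a central limit theorem and a weak law for a triangular array whose row distribution changes with $n$ because $\estoptj = \estoptj_\nh$ depends on the sample size. Because \cref{as:weakiv} is assumed to hold with probability one along the sequence $(\estoptj_\nh)$, I would prove the conditional limits $\vn \wto \Norm(0,\Omega)$ and $\omegan \pto \Omega$ for almost every realization of the instrument sequence, and then transfer them to the unconditional statement via bounded convergence applied to $P(\AR_j(\tau_0) \le t \mid \estoptj)$, the limit $\chi^2_{\dim\di}$ being free of the conditioning.

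First I would reduce $\vn$ and $\omegan$ to \emph{oracle} quantities by stripping out the estimation error in the partialling-out coefficients. Let $\delta$ denote the population projection of $\ui(\tau_0)$ onto $\tildexi$ and let $\psi_i \defeq \estoptj(\zi) - \lambda_n^\t \tildexi$ be the population-residualized instrument. Under $H_0$ one has $\ui(\tau_0) - \delta^\t\tildexi = \ui$ exactly, since $\ui(\tau_0) = \alpha + \covi^\t\beta + \ui$ and $\El[\ui \mid \tildexi] = 0$ (as $\E[\ui] = 0$ and $\E[\covi \ui] = 0$). Because $\tildeui(\tau_0)$ is the in-sample OLS residual of $\ui(\tau_0)$ on $\tildexi$, the sample orthogonality $\sum_{i \in \sample_j} \tildexi \tildeui = 0$ lets me rewrite $\vn = \frac1{\sqrt\nh}\sum_{i\in\sample_j} \estoptjtilde(\zi)\tildeui$, so that $\vn$ and $\omegan$ both use the residualized instrument. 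Expanding $\tildeui = \ui - (\tilde\delta - \delta)^\t\tildexi$ and $\estoptjtilde = \psi_i - (\tilde\lambda - \lambda_n)^\t\tildexi$, and using that the OLS errors $\tilde\delta - \delta$ and $\tilde\lambda - \lambda_n$ are $O_p(\nh^{-1/2})$ while the cross sums $\frac1{\sqrt\nh}\sum_i \psi_i\tildexi^\t$ and $\frac1{\sqrt\nh}\sum_i \tildexi\ui$ are $O_p(1)$ (their conditional means vanish, by construction of $\lambda_n$ and by $\E[\tildexi\ui]=0$), I would collect the cross terms as $o_p(1)$ and obtain the oracle expansions $\vn = \frac1{\sqrt\nh}\sum_i \psi_i\ui + o_p(1)$ and $\omegan = \frac1\nh\sum_i \ui^2\psi_i\psi_i^\t + o_p(1)$.

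Second I would establish the oracle limits. The summand $\psi_i\ui$ is conditionally mean-zero: $\E[\psi_i\ui \mid \estoptj] = \E[\estoptj(\zi)\ui \mid \estoptj] - \lambda_n^\t\E[\tildexi\ui]$, the second term vanishing and the first vanishing because, under identity weighting, the excluded part of $\estoptj$ is partially linear in $\covi$, so the nonlinear variation lives in $\wi$ and $\E[\ui \mid \wi] = 0$ together with $\E[\covi\ui]=0$ applies---this is exactly where the exclusion restriction enters. Its conditional variance is $\E[\ui^2\psi_i\psi_i^\t \mid \estoptj] \pto \Omega$ by condition (ii) of \cref{as:weakiv}. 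A Lyapunov CLT for the triangular array then yields $\frac1{\sqrt\nh}\sum_i \psi_i\ui \wto \Norm(0,\Omega)$: taking exponent $2 + \epsilon = 4$, the Lyapunov ratio is controlled since $\sum_i \E\norm{\nh^{-1/2}\psi_i\ui}^4 = \nh^{-1}\E[\ui^4\norm{\psi_i}^4 \mid \estoptj] \le C_1/\nh \to 0$ by condition (i), while the variance stays nondegenerate. For $\omegan$, a conditional weak law---with uniform integrability supplied by the fourth-moment bounds in conditions (i) and (iii)---gives $\frac1\nh\sum_i \ui^2\psi_i\psi_i^\t \pto \E[\ui^2\psi_i\psi_i^\t \mid \estoptj]$, which tends to $\Omega$ by condition (ii).

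Finally, combining $\vn \wto \Norm(0,\Omega)$ with $\omegan \pto \Omega$ (invertible), Slutsky's theorem and the continuous mapping theorem give $\AR_j(\tau_0) = \vn^\t\omegan^{-1}\vn \wto Z^\t\Omega^{-1}Z$ with $Z \sim \Norm(0,\Omega)$, i.e.\ $\chi^2_{\dim\di}$. I expect the main obstacle to be making the conditioning argument rigorous: establishing the triangular-array Lyapunov CLT \emph{along} the random, $n$-dependent instrument sequence $\estoptj_\nh$ for almost every realization, and transferring the conditional convergence to an unconditional one---all under weak-instrument asymptotics, where the instrument strength is not bounded below and the usual root-$n$ asymptotics for $\hat\tau$ are unavailable. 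The bookkeeping of the estimated partialling coefficients $\tilde\delta, \tilde\lambda$ in the first step is tedious but routine by comparison.
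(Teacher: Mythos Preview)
Your proposal is correct and follows essentially the same route as the paper: expand $\vn$ and $\omegan$ to oracle quantities in $\psi_i = \estoptj(\zi) - \lambda_n^\t\tildexi$ and $\ui$ by controlling the partialling errors $\tilde\delta - \delta$ and $\tilde\lambda - \lambda_n$, apply a conditional Lyapunov CLT (condition (i)) and a WLLN (conditions (ii)--(iii)), combine via Slutsky, and pass from the conditional to the unconditional limit by bounded convergence. The only cosmetic difference is that you first invoke the in-sample orthogonality $\sum_{i\in\sample_j}\tildexi\tildeui = 0$ to swap $\estoptj$ for $\estoptjtilde$ in $\vn$ before expanding, whereas the paper substitutes $\tildeui = \ui - (\tilde\delta-\delta)^\t\tildexi$ directly into $\vn$ and uses the LLN $\frac{1}{\nh}\sum_i\estoptj(\zi)\tildexi^\t \to \E[\estoptj(\zi)\tildexi^\t\mid\estoptj]$ to produce the $\lambda_n^\t\tildexi$ correction; both reach the same oracle sum, and your explicit check that $\E[\psi_i\ui\mid\estoptj]=0$ via the partially linear form of the identity-weighted instrument is a point the paper leaves implicit.
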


\begin{proof}
We first show that $\vn \wto \Norm(0,
\Omega)$. Observe that $\tildeui = -\deltadiff^\t \tildexi + \ui$ where
\[\deltadiff =\bk{\frac{1}{\nh}\sum_i
\tildexi \tildexi^\t}^{-1} \frac{1}{\nh} \sum_i \tildexi\ui^\t.\] Then
\begin{align*}
\vn&= \frac{1}{\sqrt{\nh}} \sum_{i\in\sample_j} \estoptj(\zi)\ui - \pr{
\frac{1}{\sqrt{\nh}} \sum_
{i\in\sample_j}\estoptj(\zi)\tildexi^\t} \deltadiff \\
&= \frac{1}{
\sqrt{\nh}}
\sum_{i\in\sample_j} \estoptj(\zi)\ui - \pr{\frac{1}{\nh} \sum_
{i\in\sample_j} \estoptj(\zi)\tildexi^\t}\sqrt{\nh}\deltadiff \\ 
&= \frac{1}{
\sqrt{\nh}} \sum_{i\in\sample_j} \pr{\estoptj(\zi) - \lambda_n^\t \tildexi}
\ui + o_p(1).
\end{align*}
The last equality follows from expanding $\deltadiff$ and applying the
following laws of large numbers
(in triangular arrays):\[
\frac{1}{\nh}\sum_i \tildexi \tildexi^\t = \underbrace{\E[\tildexi
\tildexi]}_{\text{invertible}} + o_p(1) 
\quad \frac{1}{\nh} \sum_i \estoptj(\zi) \tildexi^\t = \E
[\estoptj(\zi) X_i^\t \mid \estoptj] + o_p(1),
\]
for which the fourth-moment conditions (iii), (iv) of
\cref{as:weakiv} are sufficient. The conditions (i) and (ii) are then
Lyapunov conditions for the central limit theorem $\vn \wto \Norm(0,
\Omega)$ conditional on $\estoptj$. Since the limiting distribution does
not depend on $\estoptj$ and the conditions are stated as
$\estoptj$-almost-sure, $\vn \wto \Norm(0,
\Omega)$ unconditionally as well.\footnote{In one dimension, $\P(Z \le t
\mid \estoptj) \asto \Phi(t)$ implies that $\P(Z \le t) \to \Phi(t)$ by
dominated convergence. We may reduce the multidimensional case to the
scalar case with the
Cramer--Wold device.}

Next, we show that $\omegan \pto \Omega$. By condition (ii) and law of
large numbers (so that $\frac{1}{\nh}\sum_i \ui^2 (\estoptj(\zi) - \lambda_n \tildexi) (\estoptj
        (\zi)- \lambda_n \tildexi)^\t \pto \Omega$), it suffices to show
        that \[
\omegan = \frac{1}{\nh}\sum_i \ui^2 (\estoptj(\zi) - \lambda_n \tildexi) (\estoptj
        (\zi)- \lambda_n \tildexi)^\t + o_p(1). 
\]
Write $\tildeui = \ui - \deltadiff^\t \tildexi$ and $\estoptjtilde =
[\estoptj - \lambda_n \tildexi] - (\tilde\lambda - \lambda_n)^\t \tildexi$.
Expanding the sum yields \[
\omegan = \frac{1}{\nh}\sum_i \ui^2[\estoptj - \lambda_n \tildexi][\estoptj
- \lambda_n \tildexi]^\t + \deltadiff^\t \pr{\frac{1}{\nh}\sum_i A_{in}} + 
(\tilde\lambda -
\lambda_n)^\t \pr{\frac{1}{\nh}\sum_i B_{in}}
\]
for some $A_{in}, B_{in}$ that involve products of up to four terms  of $\ui,
\tildexi, \estoptj$. Since the fourth moments are bounded by (iii), we have
that $\frac{1}{\nh}\sum_i A_{in} = O_p(1)$ and $\frac{1}{\nh}\sum_i B_{in}
= O_p(1)$. Since $\tilde \delta - \delta$ and $\tilde \lambda - \lambda_n$
are both
$o_p(1)$, we have the desired expansion.  Therefore, by Slutsky's theorem,
$\AR_j(\tau) \wto Z^\t \Omega^{-1} Z \sim
\chi^2_{\dim \di}$ where $Z \sim \Norm(0,\Omega)$.
\end{proof}

\section{Discussion related to NPIV} 
\label{asec:npiv}
A principled modeling approach is the NPIV model, which treats the unknown
structural function $g$ as an infinite dimensional parameter and considers
the model \[
\E[Y - g(T) \mid Z] = 0. \tag{NPIV}
\]
The researcher may be interested in $g$ itself, or some functionals of $g$,
such as the average derivative $ \theta = 
\E\bk{\diff{g}{D}(T) \mid Z}
$ or the best linear approximation $\beta = \E[TT^\t]^{-1}\E[Tg(T)].$ One might
wonder whether choosing a parametric functional form in place of $g(T)$ is
without loss of generality. Linear regression of $Y$ on $T$, for instance,
yields the best linear approximation to the structural function $\E[Y|T]$,
and thus has an attractive nonparametric interpretation; it may be tempting
to ask whether an analogous property holds for IV regression. If an
analogous property does hold, we may have license in being more blas\'e
about linearity in the second stage.

Unfortunately, modeling $g$ as linear does not produce the best linear
approximation, at least not with respect to the $L^2$-norm.\footnote{In fact,
since it is possible for $g(\cdot)$ to be strictly monotone, instrument $Z$ to
be strong, and $\cov(Y,Z) = 0$, TSLS is not guaranteed to recover any
convex-weighted linear approximation to $g$ either.}
\cite{escanciano2020optimal} show that the best linear approximation can be
written as a particular IV regression estimand\[
\beta = \E[h(Z)T^\t]^{-1}\E[h(Z)Y]
\]
where $h$ has the property that $\E[h(Z) | T] = T$. Note that with
efficient instrument in a homoskedastic, no-covariate linear IV context as
we consider in \cref{sub:simple}, the optimal
instrument is $\hat D(W) = \E[D|W]$. A sufficient condition, under which the
 IV estimand based on the optimal instrument is equal to the best
 linear approximation $\beta$, is the
somewhat strange condition that the projection onto $D$ of \emph{predicted
$D$} is linear in $D$ itself: For some invertible $A$, $
\E[\hat D(W) | D] = AD.
$
The condition would hold, for instance, in a setting where $D,W$ are
jointly Gaussian and all conditional expectations are linear, but it is
difficult to think it holds in general. As such, linear IV would not
recover the best linear approximation to the nonlinear structural function
in general.

A simple calculation can nevertheless characterize the bias of the linear
approach if we take the estimand to be the best linear approximation to the
structural function. Suppose we form an instrumental variable estimator
that converges to an estimand of the form 
\[
\gamma = \E[f(Z) T^\t]^{-1}\E[f(Z)Y].
\]
It is easy to see that \[
\gamma - \beta = \ip{g-\E^*[g|T], \mu-\E^*[\mu | T]},
\]
where $\ip{A, B} = \E[AB]$, $\mu(T) = \E[f(Z)|T]$, and $\E^*[A | B]$ is
the best linear projection of $A$ onto $B$. This means that the two
estimands are identical if and only if at least one of $\mu(\cdot)$ or $g
(\cdot)$ are linear, and all else equal the bias is smaller if $\mu$ or
$g$ is more linear. Importantly, $\mu - \E^*[\mu | T]$ are objects that we
could empirically estimate since they are conditional means, and in
practice the researcher may estimate $\mu - \E^*[\mu | T]$, which delivers
bounds on $\gamma - \beta$ through assumptions on linearity of $g$.

\section{Monte Carlo example}
\label{asec:monte_carlo}

\subsection{Without covariates} 
\label{asub:nocov}

We consider a Monte Carlo experiment where
there are three instruments
$W_0, W_1, W_2 \iid \Norm(0,1)$, one binary treatment variable $D$, and an
outcome $Y$. The
probability of treatment is a nonlinear function of the instruments \[
\P(D=1 \mid W) = \sigma(3\mu(W_0, W_1)) \sin(2W_2)^2 \quad \sigma(t) = 
\frac{1}{1
+ e^{-t}}
\]
where \[\mu(W_0, W_1) = \begin{cases}
  0.1 & W_0^2 + W_1^2 > 1 \\ 
  \sgn(W_0W_1)(W_0^2 + W_1^2) & \text{otherwise}
\end{cases}\]
Naturally, the choice of $\mu$ implies an XOR-function-like pattern in the
propensity of getting treated, where $D=1$ is more likely when $W_0, W_1$
is the same sign, and less likely when $W_0, W_1$ is of different signs. An
empirical illustration of the joint distribution of $D, W_0, W_1$ is in
\cref{fig:xor_pic}. 
\begin{figure}[tb]
  \centering
\includegraphics{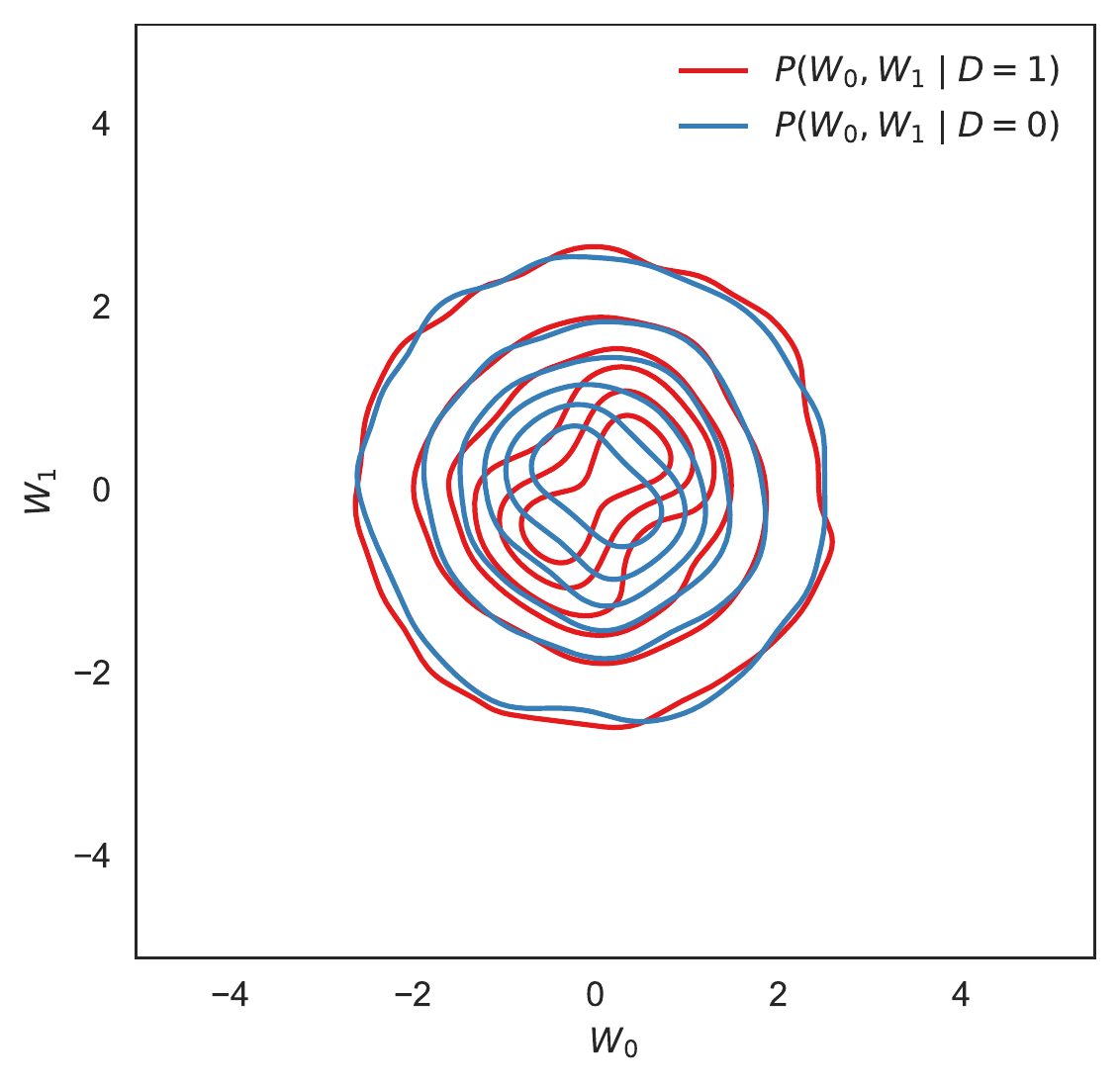}
  \caption{Conditional density of the first two instruments $W_0, W_1$
  given treatment status}
  \label{fig:xor_pic}
\end{figure}
The outcome $Y$ is generated by $Y = D + v(W_0, W_1, W_2) U$, where \[
U = 0.5 (D - \P(D=1 \mid W)) |Z_1| + \sqrt{1-0.5^2} Z_2
\]
where $Z_1, Z_2 \sim \Norm(0,1)$ independently, and $v(W_0, W_1, W_2) =
0.1 + \sigma((W_0 + W_1)W_2).$ Importantly, by construction, $\E[U \mid W]
= 0$, and the true treatment effect is $\tau = 1$.

We consider a variety of estimators for the first stage $\E[D \mid W]$ in a
split-sample IV estimation routine. In particular, we consider two
machine learning estimators\footnote{Of course, ``machine learning
estimators'' is not, strictly speaking, a well-defined distinction.} (LightGBM
and random forest) versus a variety
of more classical linear regression estimators based on taking
transformations (polynomial or discretization) of $W_0, W_1, W_2$ and
estimating via OLS on the transformed instruments. For the traditional,
linear regression-based estimators, we also consider TSLS without sample
splitting. The performances of the estimators, as well as their
definitions, are summarized in \cref{fig:nocov}. 

\begin{figure}[tb]
 { \centering
    \includegraphics[width=\textwidth]{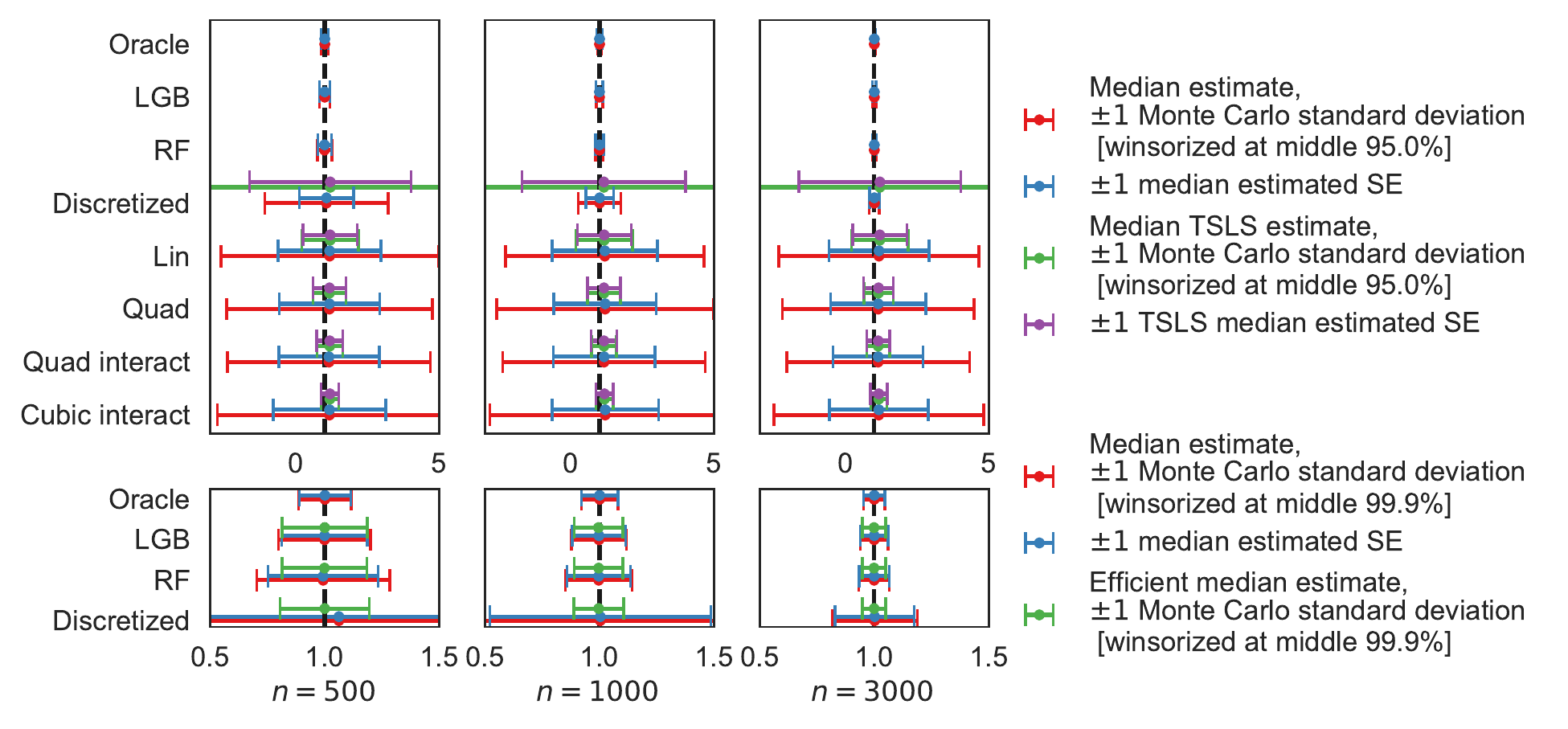}
   } 
  \textbf{Notes:} Median estimates, winsorized standard deviation, and median
  estimated standard error reported for a variety of estimators. The
  Monte Carlo standard deviations are computed from winsorized point
  estimates since the finite-sample variance of the linear IV estimator may
  not exist. 
  
  \quad The
  estimators are split-sample IV estimators that differ in their
  construction of the instrument:   
  \textsf{Oracle} refers to using the true form $\P(D=1 \mid W)$ as the
  instrument; \textsf{LGB} refers to using LightGBM, a gradient boosting
  algorithm; \textsf{RF} refers to using random forest; \textsf{Discretized}
  refers to discretizing $W_0, W_1, W_2$ into four levels at thresholds
  $-1,0,1$, and using all $4^3$ interactions as categorical covariates;
  \textsf{Lin} refers to linear regression with $W_0,W_1,W_2$ untransformed;
  \textsf{Quad} refers to quadratic regression without interactions; 
  \textsf{Quad interact} refers to quadratic regression with full
  interactions; and \textsf{Cubic interact} refers to cubic regression with
  full interactions. 
  
  \quad The latter five estimators (\textsf{Discretized} through \textsf{Cubic
  interact})
  may also be implemented directly with TSLS without
  sample splitting, and we also plot the corresponding performance
  summaries in the top panel (in green and purple). 
  
  \quad The bottom panel is a zoomed-in version of the best performing
  estimators. We also show the performance of the efficient estimator 
  (estimating the optimal instrument with inverse-variance weighting) in
  the bottom panel in green.
  
  \caption{Performance of a variety of estimators for $\tau$ in the setting
  of
  \cref{asub:nocov}}
  \label{fig:nocov}
\end{figure}

\begin{figure}[tb]
  \centering
 \includegraphics{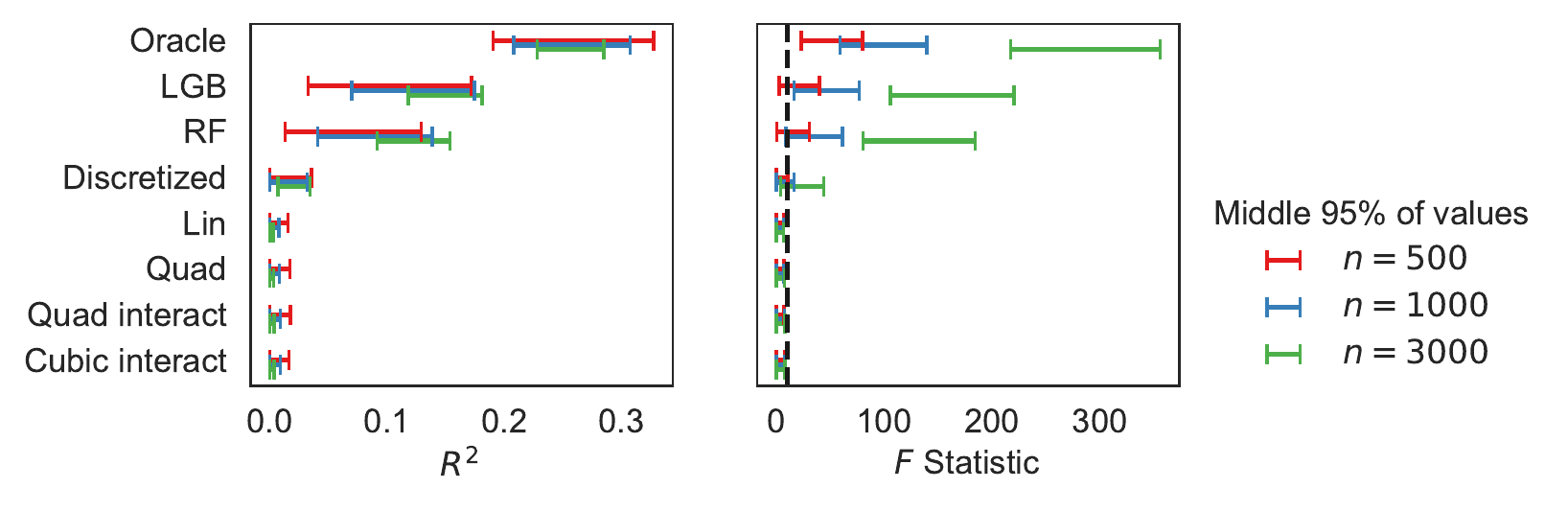}
   \caption{Fit statistics of a variety of estimators for $\tau$ in the
   setting of
  \cref{asub:nocov}. The dashed line reports the \cite{stock2005testing} rule of thumb of
  $F = 10$.}
  \label{fig:nocov_fit}
\end{figure}

\begin{figure}[tb]
  \centering
 \includegraphics[width=\textwidth]{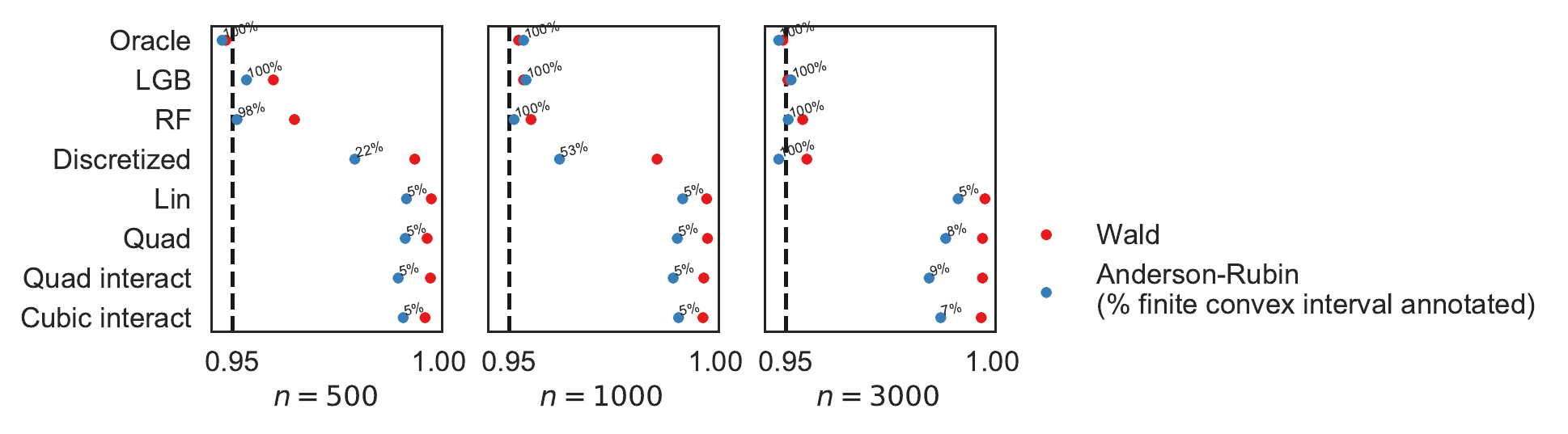}
   \caption{Wald and Anderson--Rubin coverage rates in the setting of
  \cref{asub:nocov}. The parenthesized values are the percentage empirical
  Anderson--Rubin intervals of the finite interval form.}
  \label{fig:nocov_inference}
\end{figure}

We note that flexible estimators appear to be able to discover the complex
nonlinear relationship $\E[D \mid W]$ and produce estimates of $\tau$ that
perform well, whereas more traditional estimators appear to have some trouble
estimating a strong first-stage, resulting in noisy and biased estimates
of $\tau$. In particular, polynomial regression-based
estimators generally have median-biased second stage coefficients with large
variances, particularly if sample-splitting is employed. Among the
linear-regression-based estimators, the discretization-based estimator
appears
to benefit significantly with larger sample sizes and sample-splitting.
Unsurprisingly, the flexible estimators have superior measures of
fit in $R^2$ and the first-stage $F$-statistics \cref{fig:nocov_fit}.

Estimating the optimal instrument with inverse variance weighting appears
to deliver modest benefits in improving the precision of the second-stage
estimator when LightGBM and random forest are used to estimate the
instrument, but the benefit is quite substantial when we consider the
discretization-based estimator. 

We report inference performance in \cref{fig:nocov_inference}. Again, we
see that the flexible methods (``machine learning methods'' and, to a
lesser extent, the discretization estimator) perform well, with both Wald
and Anderson--Rubin intervals covering at close to the nominal level.
Meanwhile, methods that fail to estimate a strong instrument produces
confidence sets that are very conservative in the split-sample setting, and
almost always produces Anderson--Rubin confidence sets which do not take
a finite interval shape. 

\subsection{With covariates}
\label{asub:cov}

We modify the above design by including covariates. Let \[
X = AW + V,
\] 
where $V \sim \Norm(0,I)$ and $A = \begin{bmatrix}
  1 & 0.4 & 0.3 \\ 
  0.5 & 2 & 0.2
\end{bmatrix},$ be two covariates: $X=[X_0, X_1]^\t$. 
If $X_0 > 0$, then we flip $D$ with probability 0.3 to obtain $\tilde D$,
and we let \[
\tilde Y = \tilde D + X^\t \colvecb{2}{0.1}{0.3} + U.
\]
as the modified outcome. As before, $\E[U \mid W] = 0$ and the true effect
is $\tau = 1$. However, we note that the changes in the DGP means that the
setting here and the setting in \cref{asub:nocov} are similar but not
directly comparable---it is not clear whether the setting here is easier
or harder to estimate, compare to the setting in \cref{asub:nocov}.

\begin{figure}[tb]
 { \centering
    \includegraphics[width=\textwidth]{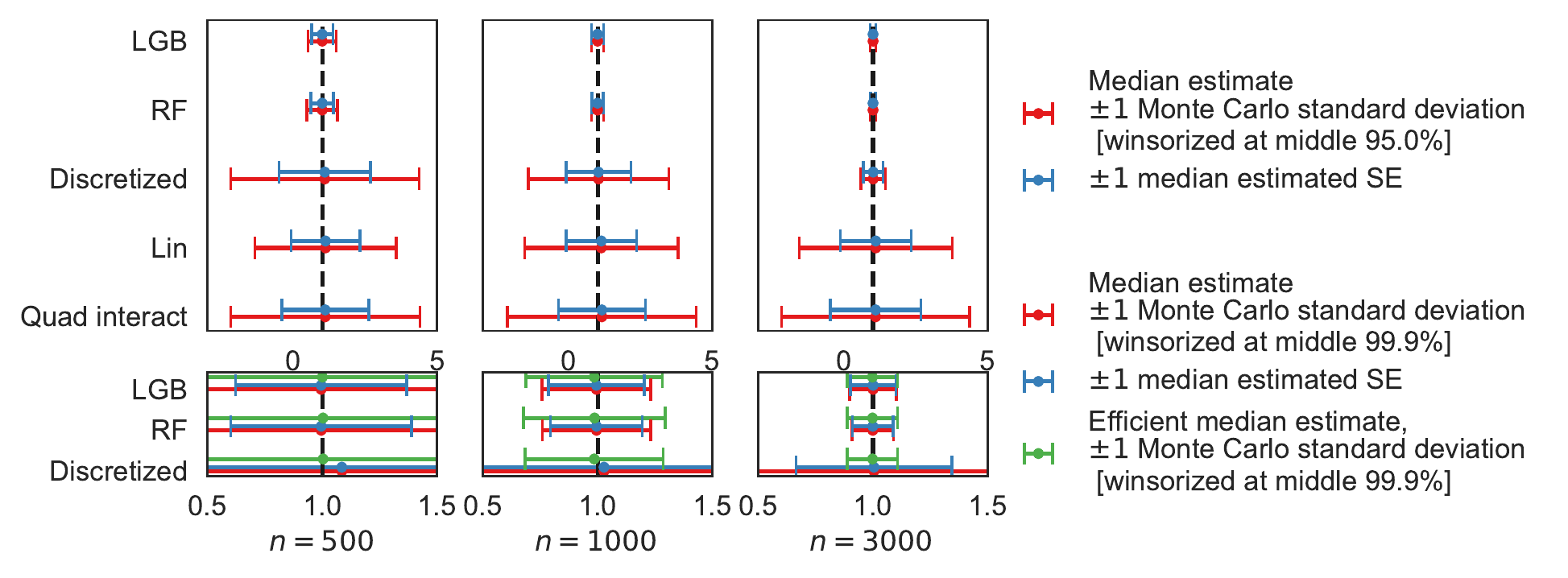}
   } 
  \textbf{Notes:} Median estimates, winsorized standard deviation, and median
  estimated standard error reported for a variety of estimators. The
  Monte Carlo standard deviations are computed from winsorized point
  estimates since the finite-sample variance of the linear IV estimator may
  not exist. 
  
  \quad The
  estimators are split-sample IV estimators that differ in their
  construction of the instrument:   
  \textsf{LGB} refers to using LightGBM, a gradient boosting
  algorithm; \textsf{RF} refers to using random forest; \textsf{Discretized}
  refers to discretizing $W_0, W_1, W_2$ into four levels at thresholds
  $-1,0,1$, and using all $4^3$ interactions as categorical covariates;
  \textsf{Lin} refers to linear regression with $W_0,W_1,W_2$ untransformed;
  \textsf{Quad interact} refers to quadratic regression with full
  interactions. 
  
  \quad The bottom panel is a zoomed-in version of the best performing
  estimators. We also show the performance of the efficient estimator 
  (estimating the optimal instrument with inverse-variance weighting) in
  the bottom panel in green.
  
  \caption{Performance of a variety of estimators for $\tau$ in the setting
  of
  \cref{asub:cov}}
  \label{fig:cov}
\end{figure}

\begin{figure}[tb]
  \centering
 \includegraphics[width=\textwidth]{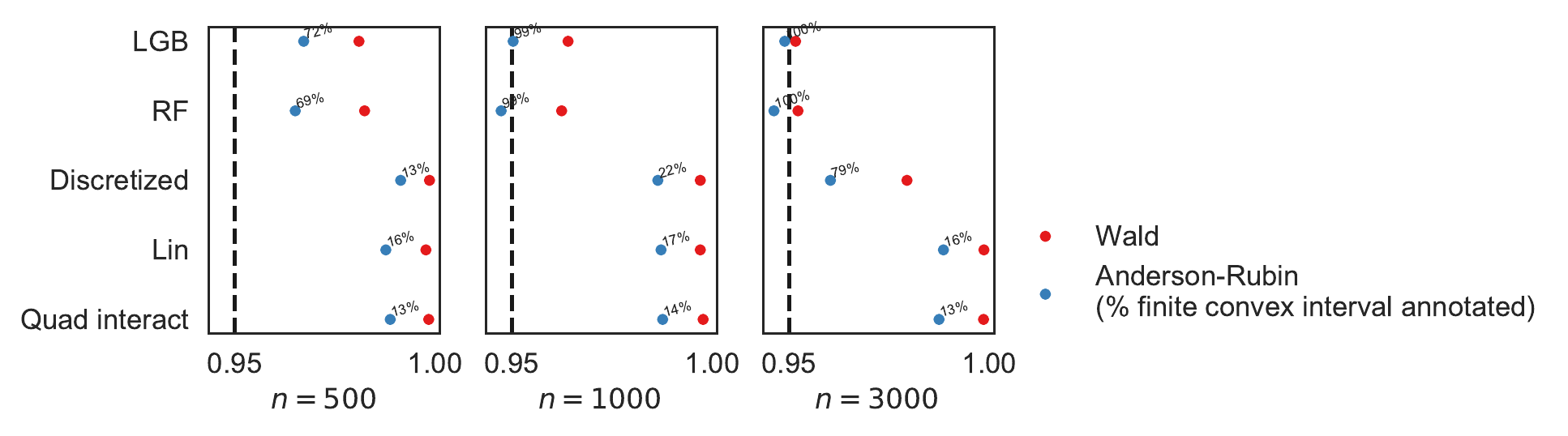}
   \caption{Wald and Anderson--Rubin coverage rates in the setting of
  \cref{asub:cov}. The parenthesized values are the percentage empirical
  Anderson--Rubin intervals of the finite interval form.}
  \label{fig:cov_inference}
\end{figure}

We show the estimation performance in \cref{fig:cov} and inference
performance in \cref{fig:cov_inference}, much like we did in
\cref{fig:nocov,fig:nocov_inference}. Again, we generally have better performance for flexible
methods, and polynomial regression methods do not appear to have comparable
performance. In this case, it appears that when the sample size is small,
even the machine learning based methods sometimes result in an estimated
instrument that is weak. Moreover, interestingly, in this case, efficient
methods do not produce appreciable improvement over methods without inverse
variance weighting, for settings where LGB and RF are used to construct
instruments. If anything, inverse variance weighting seems to do somewhat
worse in finite samples, potentially due to additional noise in the
estimation procedure. 

\section{Interpretation under heterogeneous treatment effects}
\label{asec:hte}

Suppose $\limitOpt(\wi) = a(\wi) [1, b (\wi)]'$ for some
scalar functions $a,b$ with $a(\cdot) \ge 0$ and $\E [a(\wi)] < \infty$.
Then the corresponding linear IV estimand, using $\limitOpt$ as instrument,
can be written as a weighted average of \emph{marginal treatment effects}, a
result due to \cite{heckman2005structural}, which we reproduce here: \[
\tau_\limitOpt = \int_0^1 \weight(\vmte)\cdot  {\MTE
(\vmte)}_{}\,d\vmte \defeq \int_0^1 \weight(\vmte)\cdot  \E\bk{\frac{a(\wi)} {\E a
(\wi)}(Y_1 - Y_0) \mid V=v}\,d\vmte
\]where the weights are \[
\weight(\vmte) \defeq \frac{\E\bk{a(\wi) \tilde b(\wi) \one
(\condmean(\wi) > \vmte)}}{\E[a(\wi) \tilde b(\wi)\condmean(\wi)]}
\quad \tilde b(\wi) \defeq  b(\wi) - \E\bk{\frac{a(\wi)}{\E[a(\wi)]}b
(\wi)}.
\]
The weights $\weight(\cdot)$ integrate to 1 and are \emph{nonnegative} whenever
$b(\wi)$ is a monotone transformation of $\condmean(\wi)$.\footnote{See
Section 4 of \cite{heckman2005structural} for a derivation where $a
(\wi) = 1$. The result with general positive $a(\wi)$ follows with the
change of measure $p(\wi, \di, \yi) \mapsto \frac{a(\wi)}{\E[a(\wi)]}p
(\wi,\di,\yi)$, and so we may simply replace expectation operators
with expectation weighted by $a(\wi)$.}

In the special case where $a(\wi) = 1$ and $b (\wi) = \condmean(\wi)$, which
corresponds to using the optimal instrument under identity weighting, the
estimand is a convex average of marginal treatment effects. In the case
where $a(\wi) = 1/\skedas (\wi)$ and $b(\wi) = \condmean(\wi)$, the estimand
is a convex average of \emph{precision-weighted} marginal treatment effects.
In the heterogeneous treatment effects setting, we stress that efficiency
comparisons are no longer meaningful, since the estimators do not converge
to the same estimand. However, we nonetheless highlight the benefit of using
an optimal instrument-based estimator compared to a standard linear IV
estimator: Optimal instrument-based estimators are guaranteed to recover
convex-weighted average treatment effects, where linear IV estimators with
$\wi$ as the instrument may not.

\end{document}